\documentclass[reqno]{amsart}

\usepackage[usenames,dvipsnames]{color}
\usepackage[bookmarks,colorlinks,breaklinks]{hyperref}  
\usepackage{doi,url}

\definecolor{dullmagenta}{rgb}{0.4,0,0.4}   
\definecolor{darkblue}{rgb}{0,0,0.4}
\hypersetup{linkcolor=dullmagenta,citecolor=blue,filecolor=dullmagenta,urlcolor=darkblue} 

\usepackage{amssymb,yhmath}
\usepackage{enumerate}

\newcommand{\eq}[1]{\eqref{#1}}




\newtheorem{theorem}{Theorem}[section]

\newtheorem{lemma}[theorem]{Lemma}

\newtheorem{definition}[theorem]{Definition}
\newtheorem{remark}[theorem]{Remark}

\newtheorem*{comment}{Comment}

\numberwithin{equation}{section}

\DeclareMathOperator{\supp}{supp}
\DeclareMathOperator{\tr}{tr}
\DeclareMathOperator{\Ran}{Ran}
\DeclareMathOperator{\dist}{dist}

\DeclareMathOperator{\Rea}{Re}
\DeclareMathOperator{\Ima}{Im}

\newcommand{\pr}{\prime}

\newcommand\R{\mathbb R}
\newcommand\N{\mathbb N}

\newcommand\Z{\mathbb Z}

\renewcommand\P{\mathbb P}
\newcommand\E{\mathbb E}

\newcommand\Dd{\mathbb D}

\renewcommand\H{\mathcal{H}}
\renewcommand\L{\mathrm{L}}
 
\newcommand\D{\mathcal{D}}

\newcommand{\cJ}{\mathcal{J}}
\newcommand{\cD}{\mathcal{D}}

\newcommand{\cT}{\mathcal{T}}

\newcommand\e{\mathrm{e}}

\newcommand{\bom}{{\boldsymbol{\omega}}}

\newcommand\La{\Lambda}
\newcommand{\vphi}{\varphi}

\newcommand\Chi{\raisebox{.2ex}{$\chi$}}

\newcommand{\abs}[1]{\left\lvert #1 \right\rvert}
\newcommand{\norm}[1]{\left\lVert #1 \right\rVert}
\newcommand{\scal}[1]{\left\langle #1 \right\rangle}
\newcommand{\set}[1]{\left\{ #1 \right\}}
\newcommand{\pa}[1]{\left( #1 \right)}

\newcommand{\up}[1]{^{(#1)}}

\newcommand\beq{\begin{equation}}
\newcommand\eeq{\end{equation}}

\newcommand{\qtx}[1]{\quad\text{#1}\quad}

\begin{document}

\title[Unique continuation principle for spectral projections]
{Unique continuation principle for spectral projections of Schr\" odinger operators and optimal Wegner estimates for non-ergodic random Schr\" odinger operators}

\author{Abel Klein}
\address{University of California, Irvine;
Department of Mathematics;
Irvine, CA 92697-3875,  USA}
 \email{aklein@uci.edu}

\thanks{A.K. was  supported in part by the NSF under grant DMS-1001509.}


\begin{abstract} 
We prove a unique continuation principle for  spectral projections of  Schr\" odinger operators. We consider     a Schr\" odinger operator $H= -\Delta + V$ on  $\mathrm{L}^2(\R^d)$, and  let $H_\La$ denote its restriction  to    a finite box $\La$ with either Dirichlet or periodic boundary condition.    We prove unique continuation estimates   of the type  $\Chi_I (H_\La)  W \Chi_I (H_\La) \ge \kappa\,   \Chi_I (H_\La) $ with $\kappa >0$ for appropriate potentials $W\ge 0$ and intervals $I$.  As an application, we  obtain optimal Wegner estimates   at all energies for  a class of non-ergodic  random  Schr\" odinger operators with alloy-type random potentials (`crooked' Anderson Hamiltonians).  We also prove optimal Wegner estimates at the bottom of the spectrum  with  the expected dependence on the disorder  (the Wegner estimate improves as the disorder increases), a new result even for the usual (ergodic) Anderson Hamiltonian.    These estimates are applied  to prove localization at high disorder for Anderson Hamiltonians in a fixed interval at the bottom of the spectrum. 
 \end{abstract}

\maketitle


\section{introduction}  

Let $H= -\Delta + V$ be a  Schr\" odinger operator  on  $\mathrm{L}^2(\R^d)$.  Given a box (or cube)  $\La=\La_L (x_0)\subset \R^d$ with side of length  $L  $ and center $x_0 \in \R^d$,  let $H_\La= -\Delta_\La+ V_\La$ denote the  restriction of $H$   to    the  box $\La$ with either Dirichlet or periodic boundary condition:    $\Delta_\La$ is the Laplacian  with either Dirichlet or periodic boundary condition and $V_\La$ is the restriction of $V$ to $\La$.  (We will abuse the notation and simply write $V$ for $V_\La$, i.e., $H_\La= -\Delta_\La+ V$ on $\L^2(\La)$.)
By  a unique continuation principle for  spectral projections  (UCPSP)   we will mean an estimate
of the form
\beq\label{UCPSP}
\Chi_I (H_\La)  W \Chi_I (H_\La) \ge \kappa\,   \Chi_I (H_\La),
\eeq 
where $\Chi_I$ is the characteristic function of  an interval $I \subset \R $, $W\ge 0 $ is a  potential, and $\kappa >0$ is a constant.

If $V$ and $W$ are bounded $\Z^d$-periodic  potentials,  $W\ge 0$  with  $W>0$ on some open set,
 Combes, Hislop and Klopp  \cite[Section~4]{CHK1}, \cite[Theorem~2.1]{CHK2} proved a UCPSP for  $H_\La$  with  periodic boundary condition, for  boxes $\La=\La_L (x_0)\subset \R^d$ with $L \in \N$ and $x_0 \in \Z^d$ and  arbitrary bounded  intervals $I$,  with a constant $\kappa >0$ depending  on $d,I,V,W$ but not on the box  $\La$. Their proof uses  the unique continuation principle and Floquet theory.   Germinet and Klein \cite[Theorem~A.6]{GKloc} proved a modified version of this result, using Bourgain and Kenig's quantitative unique continuation principle \cite[Lemma~3.10]{BK} and Floquet theory, obtaining  control of the constant $\kappa$ in terms of the relevant parameters.  

Rojas-Molina and Veseli\' c recently proved ``scale-free unique continuation estimates" for Schr\" odinger operators \cite[Theorem~2.1]{RV} (see also \cite[Theorem~A.1.1]{R}).  They consider a   Schr\" odinger operator $ H= -\Delta + V$, where $V$ is only required to be bounded, and its   restrictions  $H_\La$ to    boxes  $\La$ with side $L \in  \N$ with either Dirichlet or periodic boundary condition.  They decompose the box $\La$ into unit boxes, and for each unit box pick a ball of (a fixed)  radius $\delta$ contained in the unit box, and let   $W$ be the potential given by the sum of the characteristic functions of those balls. Using a version of the quantitative unique continuation principle \cite[Theorem~3.1]{RV},  they prove that if $\psi$ is an eigenfunction of $H_\La$ with eigenvalue $E$ (more generally, if $\abs{\Delta \psi} \le \abs {(V-E)\psi}$), then
\beq\label{RMV}
\norm{W\psi}_2^2 \ge \kappa \norm{\psi}_2^2,
\eeq
where the constant $\kappa >0$ depends only on  $d,V,\delta,E$, and is locally bounded on $E$.  
Since \eq{RMV} is just the UCPSP \eq{UCPSP} when $I=\set{E}$, this raises the question  of  the validity of a UCPSP  in this setting, posed   as an open question by Rojas-Molina and Veseli\' c   \cite{RV}.

In this article we prove   a UCPSP for Schr\" odinger operators (Theorem~\ref{thmUCPSP}), giving an affirmative answer to the  open question in  \cite{RV}.  The proof is based on  the quantitative unique continuation principle  derived by Bourgain and Klein  \cite[Theorem~3.2]{BKl},  restated here as Theorem~\ref{thmucp}.   This version of the quantitative unique continuation principle,   as the original result of Bourgain and Kenig  \cite[Lemma~3.10]{BK}  and the version of   Germinet and Klein  \cite[Theorem~A.1]{GKloc}, allows for   approximate solutions of the stationary Schr\" odinger equation.  (\cite[Theorem~3.1]{RV} requires   $\abs{\Delta \psi} \le \abs {V\psi}$.)  Theorem~\ref{thmucp}  can be applied not only to eigenfunctions of a Schr\" odinger operator $H$, but also to approximate eigenfunctions, i.e., 
arbitrary $\psi \in \Ran \Chi_{[E-\gamma ,E + \gamma ]}(H)$, with the error  controlled by
$ \norm{{\pa{H -E}\psi}}_{2}\le  \gamma \norm{\psi}_{2}$. (See  the derivation of \cite[Theorem~A.6]{GKloc} from  \cite[Theorem~A.1]{GKloc}.)
The notion  of   ``dominant  boxes",  introduced by Rojas-Molina and Veseli\' c  \cite[Subsection~5.2]{RV} (see also \cite[Appendix~A]{R}),  plays an important role in the derivation of Theorem~\ref{thmUCPSP} from Theorem~\ref{thmucp}.

  Using  Theorem~\ref{thmUCPSP}, we obtain (Theorems~\ref{thmWegfree} and \ref{thmWegfull})   optimal Wegner estimates  (i.e., with the correct dependence on the volume and interval length)  at all energies for  a class of non-ergodic  random  Schr\" odinger operators with alloy-type random potentials (called crooked Anderson Hamiltonians in Definition~\ref {defAndHam}).  As a consequence, we get optimal Wegner estimates  for  Delone-Anderson models at all energies (Remark~\ref{remDA}).
 We also prove (Theorem~\ref{thmWegbottom}) optimal Wegner estimates at the bottom of the spectrum for crooked Anderson Hamiltonians that have the expected dependence on the disorder  (in particular, the Wegner estimate improves as the disorder increases), a new result even for the usual (ergodic) Anderson Hamiltonian.   Using Theorem~\ref{thmWegbottom}, we prove localization at high disorder for Anderson Hamiltonians in a fixed interval at the bottom of the spectrum (Theorem~\ref{thmloc});  such a result  was previously known only with a covering condition  \cite[Theorem~3.1]{GKfinvol}.

We  use two norms on $\R^d$: \beq
\abs{x}=\abs{x}_2:=\pa{\sum_{j=1}^{d} \abs{x_{j}}^{2}}^{\frac 12 } \qtx{and} \abs{x}_\infty:=  \max_{j=1,2,\ldots,d} \abs{x_{j}}, 
\eeq
where $x=\pa{x_1,x_2,\ldots, x_d} \in \R^d$.
Distances between sets in $\R^{d}$ will be measured with respect to norm $\abs{x}$. The ball centered at $x\in \R^d$ with radius $\delta>0$ is given by
\beq
B(x,\delta):= \set{y \in \R^{d}; \, \abs{y-x}<\delta}. 
\eeq  
The  box (or cube) centered at $x\in \R^d$ with side of length  $L  $ is 
\beq
\La_L(x)= x + ]-\tfrac L 2, \tfrac L 2[^d=  \set{y \in \R^{d}; \, \abs{y-x}_\infty < \tfrac L 2};
\eeq
 we set
\beq
\widehat{\La}_L(x) = \La_L(x)\cap \Z^d.
\eeq
 Given subsets $A$ and $B$ of $\R^{d}$, and  a  function $\vphi$ on  the set $B$,  we set $\vphi_{A}:=\vphi \Chi_{A\cap B}$.  In particular, given $x\in   \R^{d}$ and $\delta >0$ we write $\vphi_{x,\delta}: =\vphi_{B(x, \delta )}$.   We let $\N_{\mathrm{odd}}$  denote the set of odd natural numbers.
If $K$ is an operator on a Hilbert space, $\cD(K)$ will denote its domain.   By a  constant  we will always mean a finite constant.  We will use  $C_{a,b, \ldots}$, $C^{\pr}_{a,b, \ldots}$,  $C(a,b, \ldots)$, etc., to  denote a constant depending only on the parameters
$a,b, \ldots$.

\begin{theorem}\label{thmUCPSP} Let  $H =   -\Delta + V$ be a  Schr\"odinger operator on 
$\mathrm{L}^2(\mathbb{R}^d)$,   where $V$ is a bounded potential.  Fix  $\delta \in ]0,\frac 1 {2}]$, let  $\set{y_k}_{k \in \Z^d}$ be sites in $\R^d$ with  $B(y_k,\delta) \subset \La_1(k)$ for all $k\in \Z^d$, and set
\beq\label{defWthm}
W= \sum_{k \in \Z^d } \Chi_{B(y_k,\delta) }.
\eeq
Given $E_{0}>0$, set $K =K(V,E_0)= 2\norm{V}_{\infty}+ E_{0} $.  Consider a box $\La=\La_L(x_0)$, where $x_0 \in \Z^d$ and $L\in \N_{\mathrm{odd}}$, $L \ge 72\sqrt{d}$.  There exists a constant $M_d>0$, such that, defining $\gamma =\gamma(d,
K,\delta) >0$ by
\beq\label{defgamma}
\gamma^2=\tfrac 1 2   \delta ^{M_d \pa{1 + K^{\frac 2 3}}} ,
\eeq
 then for any closed  interval $I \subset ]-\infty, E_{0}]$ with $\abs{I} \le 2\gamma $  we have
\beq \label{chivchi}
\Chi_{I}(H_\La) W \Chi_{I}(H_\La) \ge  \gamma^2  \Chi_{I}(H_\La).
\eeq
\end{theorem}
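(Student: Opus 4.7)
The plan is to reduce \eq{chivchi} to the quantitative unique continuation principle (Theorem~\ref{thmucp}) via the ``dominant box'' strategy of Rojas-Molina and Veseli\'c.

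First, I would take an arbitrary $\psi \in \Ran \Chi_{I}(H_\La)$ with $\norm{\psi}_2 = 1$ and let $E$ denote the midpoint of $I$. Since $\abs{I} \le 2\gamma$, the spectral theorem gives $\norm{(H_\La - E)\psi}_2 \le \gamma$, so $\psi$ is an approximate eigenfunction at energy $E$ in the sense required by Theorem~\ref{thmucp}. Writing $\zeta = (H_\La - E)\psi$ and using $\abs{E} \le E_0 + \gamma$, one gets the pointwise bound $\abs{\Delta \psi} \le K\abs{\psi} + \abs{\zeta}$ a.e.\ on $\La$, with $L^2$-error $\norm{\zeta}_2 \le \gamma$.

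Next, I would tile $\La$ by the unit cubes $\La_1(k)$, $k \in \widehat{\La}_L(x_0)$ (clean since $x_0 \in \Z^d$ and $L \in \N_{\mathrm{odd}}$), and apply Theorem~\ref{thmucp} on a ball $B(k, R_0)$ of universal radius $R_0 = R_0(d)$ about each site $k$. For periodic boundary conditions $\psi$ extends periodically to $\R^d$ and the UCP applies at every $k$; for Dirichlet boundary conditions one restricts to ``bulk'' sites with $B(k, R_0) \subset \La$, and the hypothesis $L \ge 72\sqrt{d}$ keeps the non-bulk boundary layer a negligible fraction of $\widehat{\La}_L(x_0)$. Since $B(y_k, \delta) \subset \La_1(k) \subset B(k, R_0)$, each admissible $k$ should yield a local estimate of the shape
\[
\int_{B(y_k,\delta)} \abs{\psi}^2 \;\ge\; c_1\, \delta^{M_d(1 + K^{2/3})} \int_{\La_1(k)} \abs{\psi}^2 \;-\; c_2 \int_{B(k, R_0)} \abs{\zeta}^2,
\]
with $c_1, c_2, R_0, M_d$ depending only on $d$.

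I would then call a site $k$ \emph{dominant} if the first term on the right exceeds twice the second, and invoke a Rojas-Molina--Veseli\'c style pigeonhole argument \cite[Subsection~5.2]{RV} to show the dominant sites still carry at least half of $\norm{\psi}_2^2$. Summing the local estimate over them, using that the balls $B(y_k, \delta) \subset \La_1(k)$ are pairwise disjoint and that $\sum_k \norm{\zeta_{B(k, R_0)}}_2^2 \lesssim \norm{\zeta}_2^2 \le \gamma^2$, produces $\scal{\psi, W\psi} \ge \tfrac12\, \delta^{M_d(1 + K^{2/3})} \norm{\psi}_2^2 = \gamma^2 \norm{\psi}_2^2$, which is \eq{chivchi}.

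The main obstacle will be the self-consistency of the absorption: because $\gamma^2$ is itself defined to equal $\tfrac12 \delta^{M_d(1+K^{2/3})}$, the $\delta$-independent constant $c_2$ in the error term cannot be defeated by shrinking $\gamma$, and must instead be controlled by the dominant-box selection together with a sufficiently large universal exponent $M_d$ (the $K^{2/3}$ dependence being inherited directly from the Carleman-type estimate inside Theorem~\ref{thmucp}). Threading this choice consistently, and handling the Dirichlet and periodic boundary conditions uniformly, is the quantitative heart of the argument.
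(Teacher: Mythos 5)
Your outline---viewing $\psi\in\Ran\Chi_I(H_\La)$ as an approximate eigenfunction with $\norm{(H_\La-E)\psi}_2\le\gamma\norm{\psi}_2$, localizing Theorem~\ref{thmucp} to unit boxes, selecting dominant boxes, and summing over the disjoint balls $B(y_k,\delta)$---is indeed the paper's strategy, but three of your steps do not go through as written. First, the per-site estimate you claim, with constants depending only on $d$ (and $K,\delta$), is not what Theorem~\ref{thmucp} provides: the exponent in \eq{UCPbound} contains $\log\pa{\norm{\psi_\Omega}_2/\norm{\psi_\Theta}_2}$, so the bound degenerates at sites where $\psi$ has little local mass, and no uniform constant $c_1$ exists. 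This is precisely why the Rojas-Molina--Veseli\'c dominating-box selection must come \emph{before} the unique continuation principle is invoked: a site $k$ is declared dominating when $\norm{\psi_{\La_1(k)}}_2^2\ge\tfrac1{2(2Y)^d}\norm{\widetilde{\psi}_{\La_Y(k)}}_2^2$ (with $Y\le 40\sqrt d$), which caps the ratio and makes the exponent uniform in $k$, while the overlap bound \eq{sumY} guarantees the dominating sites carry half of $\norm{\psi}_2^2$. Your alternative notion of dominance (signal term exceeds twice the error term) both presupposes the unavailable uniform local estimate and does not obviously leave half the mass on the dominant sites. Second, Theorem~\ref{thmucp} requires $x_0\in\Omega\setminus\overline{\Theta}$ with $Q(x_0,\Theta)\ge1$ and $\delta\le\dist(x_0,\Theta)$, so you cannot use the ball $B(y_k,\delta)\subset\La_1(k)$ to control the mass of $\psi$ on that \emph{same} unit box; the paper introduces the shift $J(k)=k\pm2\e_1$ and bounds $\norm{\psi_{\La_1(k)}}_2$ by the ball $B(y_{J(k)},\delta)$ sitting in a neighboring box, the two-to-one multiplicity of $J$ in \eq{Jinv} being absorbed into constants.

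Third, your handling of Dirichlet boundary conditions fails: discarding the sites $k$ with $B(k,R_0)\not\subset\La$ is measured in the wrong currency---what must be negligible is the fraction of $\norm{\psi}_2^2$ carried by the boundary layer, not the fraction of sites---and since $R_0$ must be of order $\sqrt d$ (the hypothesis $B(x_0,6Q+2)\subset\Omega$ with $Q\sim\sqrt d$), at the minimal scale $L\ge72\sqrt d$ that layer is not a small fraction of the box in either sense. The paper avoids this entirely by extending $\psi$ antisymmetrically across $\partial\La$ (and $V$ by the matching reflections), so the extension satisfies the same equation with $\norm{\widehat{V}}_\infty=\norm{V}_\infty$ and the unique continuation principle applies at every site, with \eq{sumY} again controlling the total extended mass; only the periodic case is handled by periodic extension as you propose. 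Finally, your closing worry about absorbing the error term is resolved by the structure of \eq{UCPbound} itself: the error enters with a factor $\delta^2$, and since $\norm{\zeta_\La}_2^2\le\gamma^2\norm{\psi}_2^2$ with $2\gamma^2=\delta^{M_d(1+K^{2/3})}$, the summed error is at most $\gamma^2\norm{\psi}_2^2$ once $M_d$ is enlarged to swallow the $d$-dependent overlap constants, which yields \eq{chivchi}.
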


Theorem~\ref{thmUCPSP} is proved in Section~\ref{secUCPSP}. It  is derived from  the quantitative unique continuation principle  given in \cite[Theorem~3.2]{BKl}
using the  ``dominant  boxes"  introduced by Rojas-Molina and Veseli\' c  \cite[Subsection~5.2]{RV}, \cite[Appendix~A]{R}.

Combes, Hislop and Klopp  used  the UCPSP  to prove Wegner estimates for Anderson Hamiltonians, random  Schr\" odinger operators on $\L^2(\R^d)$ with  $q\Z^d$-periodic background potential ($q\in\N$) and  alloy-type random potentials located in the lattice $\Z^d$;  the estimate  \eq{UCPSP}  replaces the covering condition  required  by Combes and Hislop \cite{CH}. They obtained   optimal Wegner estimates    at all energies for these ergodic random  Schr\" odinger operators  \cite[Theorem~1.3]{CHK2}.

Rojas-Molina and Veseli\' c used \eq{RMV} to prove  Wegner estimates at all energies, optimal up to an additional factor of $ \abs{\log \abs{I}}^d$ ( $\abs{I}$ denotes the length of the interval $I$), for  a class of non-ergodic  random  Schr\" odinger operators on $\L^2(\R^d)$ with alloy-type random potentials, including Delone-Anderson models \cite[Theorem~4.4]{RV}. They also proved  optimal Wegner estimates at the bottom of the spectrum \cite[Theorem~4.11]{RV}

 These non-ergodic  random  Schr\" odinger operators are  `crooked' versions of the usual (ergodic) Anderson Hamiltonian. Theorem~\ref{thmUCPSP} leads to optimal Wegner estimates  at all energies for crooked  Anderson Hamiltonians. (In particular, we obtain optimal Wegner estimates  for  Delone-Anderson models at all energies; see Remark~\ref{remDA}.)

\begin{definition} \label{defAndHam} A \emph{crooked  Anderson Hamiltonian} is a  random Schr\"odinger  operator on 
$\mathrm{L}^2(\mathbb{R}^d)$ of the form
\beq\label{AndH}
H_{\bom}: =  H_0+
V_{\bom} ,
\eeq
where: 

\begin{enumerate}
\item  $H_0 = -\Delta + V\up{0}$,  where the  the background potential $ V\up{0}$  is bounded  and  $\inf \sigma(H_0)= 0 $.
\item $V_{\bom}$ is a \emph{crooked alloy-type random potential}:
\beq
V_{\bom} (x):= 
\sum_{j \in \Z^d} \omega_j   u_j(x),  \qtx{with} u_j(x)= v_j(x-y_j) ,\label{AndV}
\eeq
where, for some   $\delta_- \in ]0,\frac 12]$ and $ u_{-}, \delta_{+}, M\in ]0,\infty[$:
\begin{enumerate}
\item $\set{y_j}_{j \in \Z^d}$ are sites in $\R^d$ with  $B(y_j,\delta_-) \subset \La_1(j)$ for all $j\in \Z^d$;

\item the single site potentials $\set{v_j}_{j \in \Z^d} $  are measurable  functions on $\R^d$ with
 \begin{equation} \label{u}
 u_{-}\Chi_{B(0,\delta_-)}\le v_j \le \Chi_{\Lambda_{\delta_{+}}(0)}\qtx{for all} j \in \Z^d; \end{equation}
\item 

$\bom=\{ \omega_j\}_{j\in
\Z^d}$ is a family of independent 
 random
variables  whose  probability 
distributions $\{ \mu_j\}_{j\in
\Z^d}$ are non\--dege\-nerate with 
\beq \label{mu}
 \supp \mu_j \subset   [0,M ]  \qtx{for all} j \in \Z^d.
\eeq

\end{enumerate}
\end{enumerate}
\end{definition}

If  the background potential $V\up{0}$ is $q\Z^d$-periodic  with  $q\in\N$, and   $y_j=j$ and $v_j=v_0$ for all $j\in \Z^d$, then $H_{\bom}$ is the usual (ergodic) Anderson Hamiltonian.

Given a  crooked  Anderson Hamiltonian $H_{\bom} $,  we will use the following notation, definitions, and   observations:
\begin{itemize}
\item   We let $ V\up{0}_\infty:=\norm{ V\up{0}}_\infty $,  and set
\beq
U(x):= {\sum}_{j \in \Z^d}  u_j(x), \qtx{so}  U_\infty:= \norm{U}_\infty \le \pa{2+  \delta_+}^d .  \label{U+}
\eeq
\item We have 
 \beq
\norm{V_{\bom}}_\infty \le M U_\infty, \qtx{and hence} \norm{V\up{0} + V_{\bom}}_\infty \le  V\up{0}_\infty+ M U_\infty .
\eeq

\item  We set
\beq\label{defWU}
W := \sum_{j\in \Z^d } \Chi_{B(y_j,\delta_-) }  =\Chi_{\cup_{j\in \Z^d }B(y_j,\delta_-)},
\eeq
and note that
\beq\label{Wprop}
  0\le W  \le u_-^{-1} U,   \quad W^2=W, \qtx{and}  \norm{W}_\infty = 1.
\eeq

\item  We will consider  only boxes  $\La=\La_L(x_0)$, where $x_0 \in \Z^d$ and $L\in \N_{\mathrm{odd}}$.  For such a box $\La$  we define
finite volume crooked Anderson Hamiltonians, with either Dirichlet or periodic boundary condition, by
\beq
H_{\bom,\La}= H_{0,\La} +  V_{\bom}\up{\La} \qtx{on} \L^2(\La),
\eeq
where $H_{0,\La}$ is the restriction of $H_0$ to $\La$ with the specified boundary condition, and
\beq
V_{\bom}\up{\La}(x):=  \sum_{j \in \widehat{\La}} \omega_j   u_j(x) \qtx{for} x \in \R^d.
\eeq
We  also set 
\begin{align}
U\up{\La}(x)&:=  \sum_{j \in \widehat{\La}}  u_j(x) \le U(x) ,  \\
W\up{\La}(x)& :=  \sum_{j \in \widehat{\La}} \Chi_{B(y_j,\delta_-)} (x))\le u_-^{-1} U\up{\La}(x) , \label{Wprop22}
\end{align} 
and note that  $W\up{\La}(x)=W(x)$ for $x \in \La$.

\item   We write   $P_{\bom,\La}(B):=\Chi_{B}(H_{\bom,\La})$ for a Borel set $B \subset \R$.

\item   Given a box $\La$, we set $S_\La(t) := \max_{j \in \widehat\La}    S_{\mu_j}(t) $ for $t\ge 0$, where $S_{\mu}(t):=  \sup_{a \in \R} \mu([a,a+t]) $ denotes  the concentration function of the probability measure $\mu$.  We also set  $S(t) := \sup_{j \in \Z^d}    S_{\mu_j}(t) $ for $t\ge 0$.

\end{itemize}

\begin{remark}  We defined a \emph{normalized} crooked  Anderson Hamiltonian.  Requiring  $\inf \sigma(H_0)= 0 $ is just a convenience.  It suffices to have   $ v_j \le u_+\qtx{for all} j \in \Z^d$ for some $u_+ \in ]0,\infty[$  in \eq{u} (we took $u_+=1$), and  we need only 
$\supp \mu_j \subset   [M_-,M_+ ]  \qtx{for all} j \in \Z^d$ with $M_\pm \in \R$ in \eq{mu}.  Since an unrenormalized crooked  Anderson Hamiltonian is always equal to a renormalized crooked  Anderson Hamiltonian plus a constant (see the argument  in \cite[Subsection~2.1]{GKloc}), there is no loss of generality in taking $H_{\bom}$ as in  Definition~\ref{defAndHam}.
\end{remark}

Let $H_{\bom}$  be a crooked Anderson Hamiltonian $H_{\bom}$.  Using the UCPSP of Theorem~\ref{thmUCPSP} with $H=H_{0}$ and $W$ as in \eq{defWU},  we can simply  follow the proof in   \cite{CHK2} obtaining the following  extension of their results for crooked  Anderson Hamiltonians.  

 \begin{theorem}\label{thmWegfree} Let $H_{\bom}$ be a crooked Anderson Hamiltonian.
Given $E_{0}>0$, set $K_0=E_{0} + 2V\up{0}_\infty $, and define $\gamma_0 =\gamma_0(d,
K_0,\delta_-) >0$ by
\beq\label{defgamma0}
\gamma_0^2=\tfrac 1 2   \delta_- ^{M_d \pa{1 + K_0^{\frac 2 3}}},
\eeq
where $M_d>0$ is the constant of Theorem~\ref{thmUCPSP}.
 Then for any closed  interval $I \subset ]-\infty,E_{0}]$ with $\abs{I} \le 2\gamma_0 $ and  any box $\La=\La_L(x_0)$, where $x_0 \in \Z^d$ and $L\in \N_{\mathrm{odd}}$, $L \ge 72\sqrt{d}+ \delta_+$,   we have
\beq\label{opWegner0}
\E\set{ \tr P_{\bom,\La }(I) } \le  C_{d, \delta_\pm,u_-, V\up{0}_\infty,E_0}
  \pa{1+M^{2^{2 + \frac {\log d}{\log 2}}}}S_\La(\abs{I})\abs{\La}.
\eeq
\end{theorem}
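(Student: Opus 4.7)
The plan is to adapt the Combes--Hislop--Klopp argument \cite{CHK2} for the ergodic Wegner estimate, feeding in Theorem~\ref{thmUCPSP} as the unique continuation input. The strategy has three parts: a UCPSP-type lower bound for $P_{\bom,\La}(I)$, a reduction to single-site trace estimates, and spectral averaging over each $\omega_j$.

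For the first part, I would apply Theorem~\ref{thmUCPSP} to the background operator $H_0 = -\Delta + V\up{0}$ on $\La$, using the sites $\set{y_j}_{j \in \Z^d}$ and radius $\delta = \delta_-$ inherited from the crooked random potential. Since the UCPSP constant of Theorem~\ref{thmUCPSP} depends on $V$ only through $\norm{V}_\infty$, the choice $V = V\up{0}$ produces exactly the $M$-independent constant $\gamma_0^2$ of \eq{defgamma0}, yielding
\beq
\Chi_I(H_{0,\La}) W \Chi_I(H_{0,\La}) \ge \gamma_0^2 \,\Chi_I(H_{0,\La})
\eeq
for every closed $I \subset \,]-\infty, E_0]$ with $\abs{I} \le 2\gamma_0$. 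The non-negativity $V_{\bom}\up{\La} \ge 0$, combined with the form comparison $H_{\bom,\La} \ge H_{0,\La}$, then lets one promote this inequality to the analogous bound with $P_{\bom,\La}(I)$ in place of $\Chi_I(H_{0,\La})$: heuristically, eigenfunctions of $H_{\bom,\La}$ with eigenvalue $\le E_0$ carry nearly all of their $H_{0,\La}$-spectral mass in $]-\infty, E_0]$, where the UCPSP is available.

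Taking traces and using the pointwise bound $W \le u_-^{-1} U\up{\La}$ from \eq{Wprop22},
\beq
\gamma_0^2 \,\tr P_{\bom,\La}(I) \le \tr\bigl(P_{\bom,\La}(I)\, W\, P_{\bom,\La}(I)\bigr) \le u_-^{-1} \sum_{j \in \widehat{\La}} \tr\bigl(P_{\bom,\La}(I)\, u_j\, P_{\bom,\La}(I)\bigr).
\eeq
I would then take $\E_{\omega_j}$ of each summand and apply the Combes--Hislop--Klopp spectral averaging lemma: since $\omega_j \mapsto H_{\bom,\La}$ is monotone non-decreasing through the compactly supported positive perturbation $u_j$ and $\supp \mu_j \subset [0,M]$, it yields an estimate of the form
\beq
\E_{\omega_j}\set{\tr\bigl(P_{\bom,\La}(I)\, u_j\, P_{\bom,\La}(I)\bigr)} \le C_{d,\delta_+, V\up{0}_\infty,E_0} \bigl(1 + M^{2^{2+\frac{\log d}{\log 2}}}\bigr)\, S_{\mu_j}(\abs{I}).
\eeq
The polynomial-in-$M$ prefactor comes from a Weyl/Cwikel--Lieb--Rozenblum eigenvalue count for $H_{\bom,\La}$ at energy of order $E_0 + M$, which controls the rank of an auxiliary spectral projection appearing in the CHK averaging identity. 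Summing over $j \in \widehat{\La}$, using $S_{\mu_j}(\abs{I}) \le S_\La(\abs{I})$ and $\abs{\widehat{\La}} \le \abs{\La}$, yields \eq{opWegner0}.

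The main obstacle is the upgrade in the first step from $H_{0,\La}$ to $H_{\bom,\La}$ without sacrificing the $M$-independent constant $\gamma_0$: a naive direct application of Theorem~\ref{thmUCPSP} to $H_{\bom,\La}$ would require using $K = 2(V\up{0}_\infty + M U_\infty) + E_0$, which costs a factor $\delta_-^{-O(M^{2/3})}$, exponentially worse than the polynomial-in-$M$ dependence asserted in \eq{opWegner0}. Making the $H_{\bom,\La}$-UCPSP upgrade rigorous, through the nonnegativity of $V_{\bom}\up{\La}$ together with monotonicity-type arguments that compare low-lying spectral subspaces, is the delicate piece of bookkeeping in an otherwise routine CHK-style argument.
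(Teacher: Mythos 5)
The first step of your plan (apply Theorem~\ref{thmUCPSP} to $H_{0,\La}$, obtaining the $M$-independent constant $\gamma_0$) is exactly what the paper does. The gap is your proposed ``upgrade'' to $P_{\bom,\La}(I)\,W\,P_{\bom,\La}(I)\ge \gamma_0^2\,P_{\bom,\La}(I)$, which you treat as delicate bookkeeping. With an $M$-independent constant this inequality is in general \emph{false} at the energies the theorem covers: if $\Upsilon=\R^d\setminus\overline{\cup_j B(y_j,\delta_-)}\neq\emptyset$ and $E_0>E(\Upsilon)$, then for couplings $\omega_j\approx M\to\infty$ the low-lying eigenfunctions of $H_{\bom,\La}$ below $E_0$ converge to Dirichlet eigenfunctions supported in $\Upsilon$, so their mass on $\cup_j B(y_j,\delta_-)$, i.e.\ $\norm{W\psi}_2^2$, tends to $0$; no bound $\ge\gamma_0^2\norm{\psi}_2^2$ uniform in $M$ can hold. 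This is precisely why the paper's random-operator UCPSP results are either $M$-dependent (Theorem~\ref{thmWegfull}, where $K$ contains $MU_\infty$) or confined to $E_1<E(\infty)$ (Theorem~\ref{thmWegbottom}); and if your inequality were true, Lemma~\ref{lemWeg} would give a Wegner constant independent of $M$ at all energies, i.e.\ \eq{opWegnercovcondlamb} without a covering condition, which the paper explicitly states it cannot prove. The heuristic you offer also breaks down quantitatively: an eigenfunction of $H_{\bom,\La}$ at energy $E\le E_0$ satisfies $(-\Delta+V\up{0}-E)\psi=-V_{\bom}\up{\La}\psi$, so feeding it into the quantitative UCP (Theorem~\ref{lemUCPeig}) for the background operator produces the error term $\delta_-^2\norm{V_{\bom}\up{\La}\psi}_2^2$, of size up to $\delta_-^2M^2U_\infty^2\norm{\psi}_2^2$, which swamps the left-hand side unless the constant is allowed to depend on $M$; the form inequality $H_{\bom,\La}\ge H_{0,\La}$ does not transfer spectral-projection inequalities in the required sense.

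The paper's actual route after \eq{chivchiWegfree} avoids any random-operator UCPSP: it follows the proof of \cite[Theorem~1.3]{CHK2}, in which the free-operator UCPSP is the only unique-continuation input, and the comparison between $P_{\bom,\La}(I)$ and the spectral subspaces of $H_{0,\La}$ is carried out through resolvent identities, Combes--Thomas bounds, and the iterated trace-class step (the analogue of the $(T_\La T_\La^*)^{2^{m-1}}$ argument in Lemma~\ref{lemWeg}, with $2^{m-1}>d/4$), combined with spectral averaging. It is this iteration, accumulating bounded factors of the random potential of size $\le MU_\infty$, that produces the polynomial factor $M^{2^{2+\frac{\log d}{\log 2}}}$ in \eq{opWegner0}; your attribution of that factor to a Weyl/Cwikel--Lieb--Rozenblum eigenvalue count at energy of order $E_0+M$ is not the mechanism, and your spectral-averaging step also glosses over where the trace-class operators with volume-proportional trace norms come from. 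As written, the proposal is missing the essential CHK machinery and rests on an upgrade step that fails.
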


We may also use Theorem~\ref{thmUCPSP} with $H=H_0 +V\up{\La}_{\bom}$ and $W$ as in \eq{defWU},  obtaining the  UCPSP \eq{chivchi} with a constant $\gamma$ independent of $\bom$.
  In Lemma~\ref{lemWeg} we show how this implies a Wegner estimate.   Combining Theorem~\ref{thmUCPSP} and Lemma~\ref{lemWeg}  yields the following optimal Wegner estimate.

 \begin{theorem}\label{thmWegfull}  Let $H_{\bom}$ be a crooked Anderson Hamiltonian.
Given $E_{0}>0$, set $K=E_{0} + 2\pa{V\up{0}_\infty+MU_\infty}$, and define $\gamma =\gamma(d,
K,\delta_-) >0$ by
\beq\label{defgammaW}
\gamma^2=\tfrac 1 2   \delta_- ^{M_d \pa{1 + K^{\frac 2 3}}},
\eeq
where $M_d>0$ is the constant of Theorem~\ref{thmUCPSP}.
 Then for any closed  interval $I \subset ]-\infty,E_{0}]$ with $\abs{I} \le 2\gamma $ and  any box $\La=\La_L(x_0)$, where $x_0 \in \Z^d$ and $L\in \N_{\mathrm{odd}}$, $L \ge 72\sqrt{d}+ \delta_+$,   we have
\beq\label{opWegner}
\E\set{ \tr P_{\bom,\La }(I) } \le  C_{d, \delta_+, V\up{0}_\infty}
   \pa{u_-^{-2}\gamma^{-4} (1 +E_0)}^{2^{1 + \frac {\log d}{\log 2}} }S_\La(\abs{I})\abs{\La}.
\eeq
\end{theorem}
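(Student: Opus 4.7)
The plan follows the two-step reduction sketched immediately before the statement. The first step is to apply Theorem~\ref{thmUCPSP} to the fixed-realization operator $H_{\bom,\La}$, viewed as the finite-volume restriction of the deterministic Schr\"odinger operator $-\Delta + V\up{0} + V_{\bom}\up{\La}$ on $\L^2(\R^d)$. The only input needed to trigger Theorem~\ref{thmUCPSP} with a constant that is uniform in $\bom$ is a uniform $\L^\infty$ bound on the potential, and this is furnished by \eq{U+} and \eq{mu}: $\norm{V\up{0} + V_{\bom}\up{\La}}_\infty \le V\up{0}_\infty + M U_\infty$ for every realization. Consequently the parameter $2\norm{V}_\infty + E_0$ entering Theorem~\ref{thmUCPSP} is majorized by $K = E_0 + 2(V\up{0}_\infty + M U_\infty)$, and the resulting UCPSP constant coincides with the $\gamma$ of \eq{defgammaW}.

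The hypotheses of Theorem~\ref{thmUCPSP} are satisfied with $\delta = \delta_-$, the same sites $y_k$, and the same potential $W$ as in \eq{defWU}; the assumption $L \ge 72\sqrt{d}+\delta_+$ of our theorem is slightly stronger than the $L \ge 72\sqrt{d}$ of Theorem~\ref{thmUCPSP}, with the extra $\delta_+$ ensuring the correct interaction of the single-site potentials $u_j$, $j \in \widehat\La$, with the box $\La$. Theorem~\ref{thmUCPSP} then gives, for every closed interval $I \subset \,]{-\infty},E_0]$ with $\abs{I} \le 2\gamma$,
\beq\label{plan:ucpspW}
\Chi_{I}(H_{\bom,\La})\, W \, \Chi_{I}(H_{\bom,\La}) \ge \gamma^{2}\, \Chi_{I}(H_{\bom,\La}),
\eeq
with the \emph{deterministic} $\gamma$ of \eq{defgammaW}. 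Because $W\up{\La} = W$ on $\La$ and $W\up{\La} \le u_-^{-1} U\up{\La}$ by \eq{Wprop22}, \eq{plan:ucpspW} immediately upgrades to $\Chi_{I}(H_{\bom,\La})\, U\up{\La}\, \Chi_{I}(H_{\bom,\La}) \ge u_-\gamma^{2}\, \Chi_{I}(H_{\bom,\La})$, so the full alloy random potential $U\up{\La}$ is bounded below on $\Ran \Chi_{I}(H_{\bom,\La})$ by an explicit, $\bom$-independent constant.

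What remains is to invoke Lemma~\ref{lemWeg}, which is the quantitative translation of any such $\bom$-uniform UCPSP into a Wegner trace bound. This is the substantive step: it is a Combes--Hislop--Klopp style spectral-averaging argument in which each $\omega_j$, $j \in \widehat\La$, is averaged against its distribution $\mu_j$ (producing the concentration function $S_\La(\abs{I})$ and the volume factor $\abs{\La}$), combined with a Chebyshev/doubling iteration that promotes a quadratic-form bound to a trace bound and is the source of the exponent $2^{1 + \log d/\log 2} = 2d$ on $u_-^{-2}\gamma^{-4}(1+E_0)$ in \eq{opWegner}. Substituting \eq{plan:ucpspW} into Lemma~\ref{lemWeg} yields \eq{opWegner}. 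The main obstacle is therefore not in the outer argument, which is nearly immediate once the uniform UCPSP is in hand, but in Lemma~\ref{lemWeg} itself: the delicate quantitative bookkeeping of how each averaging/doubling step consumes a power of $\gamma$ and of $u_-$, so that the final exponent and the explicit polynomial dependence on $u_-$, $\gamma$, and $E_0$ stated in \eq{opWegner} come out exactly as claimed.
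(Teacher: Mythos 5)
Your proposal is correct and follows essentially the same route as the paper: apply Theorem~\ref{thmUCPSP} with $H=H_0+V\up{\La}_{\bom}$ (uniform in $\bom$ since $\norm{V\up{0}+V_{\bom}\up{\La}}_\infty\le V\up{0}_\infty+MU_\infty$), pass from $W\up{\La}$ to $U\up{\La}$ via \eq{Wprop22} to get $\kappa=u_-\gamma^2$, and then invoke Lemma~\ref{lemWeg}, whose bound with $\kappa^{-2}=u_-^{-2}\gamma^{-4}$ is exactly \eq{opWegner}. This matches the paper's two-line proof, with the quantitative work residing in Lemma~\ref{lemWeg} just as you indicate.
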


Theorems~\ref{thmWegfree} and \ref{thmWegfull} are proved in Section~\ref{secWeg}.
They both give optimal Wegner estimates valid at all energies,  but the constants in \eq{opWegner0} and \eq{opWegner} differ on their dependence on the relevant parameters.

\begin{remark}[The  Delone-Anderson model] \label{remDA}  Theorems~\ref{thmWegfree} and \ref{thmWegfull}  can be applied to the Delone-Anderson model, improving the Wegner estimate of \cite[Theorem~4.4]{RV}.  The Delone-Anderson Hamiltonian is defined almost exactly as in Definition~\ref{defAndHam}, the  difference being that the crooked alloy-type random potential of \eq{AndV} is replaced by the Delone-Anderson random potential
\beq\label{DelAndV}
V_{\bom} (x):= 
\sum_{j \in \Dd} \omega_j   u_j(x),  \qtx{with} u_j (x)= v_j(x-j), 
\eeq
where:
\begin{enumerate}
\item $\Dd\subset \Z^d$ is a Delone set, i.e., there exist scales  $0< K_1 < K_2$ such that $\#\pa{ \Dd \cap \La_{K_1}(x) }\le 1$ and $\# \pa{\Dd \cap \La_{K_2}(x)} \ge 1$ for all $x\in \R^d$, where $\#A$ denotes the cardinality of the set $A$;

\item $\bom=\set{\omega_j}_{j\in \Dd}$ and $\set{v_j}_{j\in \Dd}$ are as in 
Definition~\ref{defAndHam} with $\Dd$
 substituted for $\Z^d$.
\end{enumerate}
 We set
 $R=2 \min \set{r \in \N; \; r \ge \frac {K_2}2  +\delta_-}$, and   fix  
 $\, y_k \in \Dd \cap \La_{K_2}(k)$ for each $k\in R Z^d$; note that  $B(y_k,\delta_-)\subset \La_R (k)$.  We set
 $\Dd_1=\set{y_k}_{k\in R Z^d}$ and $\Dd_2=\Dd\setminus \Dd_1$, and   decompose the Delone-Anderson random potential  similarly to \cite[Eq.~(21)]{RV}:
 \begin{gather}
 V_{\bom} (x)= V_{\bom\up{1}} (x) + V_{\bom\up{2}} (x),\\ \notag
 \text{where} \quad  \bom\up{i}=\set{\omega_j}_{j\in \Dd_i}\qtx{and} V_{\bom\up{i}} (x):= 
\sum_{j \in \Dd_i} \omega_j   u_j(x) \qtx{for} i=1,2.
 \end{gather}
Note that $V_{\bom\up{2}} \ge 0$, and, 
 since 
 $\Dd$ is a Delone set, there exists a constant $V\up{2}_\infty$ such that
 $\norm{V_{\bom\up{2}}}_\infty\le V\up{2}_\infty$ for $\P$-a.e.\ $\bom\up{2}$.  We set  
 \beq
 H_{\bom\up{1}}\up{\bom\up{2}}: =-\Delta + V\up{0,\bom\up{2}} +V_{\bom\up{1}}, \qtx{where}  V\up{0,\bom\up{2}}= V\up{0}+ V_{\bom\up{2}},
 \eeq
and note that 
\beq\label{V2unif}
\norm{V\up{0,\bom\up{2}} }_\infty\le V_\infty\up{0} +  V\up{2}_\infty \qtx{for} \P\text{-a.e.} \ \bom\up{2}.
\eeq 
 If we had  $R=1$,  $H_{\bom\up{1}}\up{\bom\up{2}}$
 would be a crooked Anderson Hamiltonian  with background potential $V\up{0,\bom\up{2}}$
and alloy-type potential $V\up{1}_{\bom\up{1}}$,   but 
would not be not normalized as in Definition~\ref{defAndHam} since we we only have
$\inf \sigma \pa{-\Delta+ V\up{0,\bom\up{2}} }\ge 0$.  But Theorems~\ref{thmWegfree} and \ref{thmWegfull}  hold as stated  with the same constants if we only required $\inf \sigma(H_0)\ge 0 $ in Definition~\ref{defAndHam}.  Moreover, Theorems~\ref{thmUCPSP}, \ref{thmWegfree} and \ref{thmWegfull} are valid  with boxes of side $R$ instead of boxes of side $1$, except that all the constants would depend on $R$.  We can thus apply Theorems~\ref{thmWegfree} and \ref{thmWegfull}, averaging only  with respect to $\bom\up{1}$,  to obtain Wegner estimates for $H_{\bom\up{1}}\up{\bom\up{2}}$ with $S_\La(t) := \max_{j \in \Dd_1 \cap\La}    S_{\mu_j}(t) $,  with   constants  independent of $\bom\up{2}$  for $\P$-a.e.\ $\bom\up{2}$
in view of \eq{V2unif}. We thus conclude that the Wegner estimates of  Theorems~\ref{thmWegfree} and \ref{thmWegfull}
are valid for the Delone-Anderson model, with $V_\infty\up{0} +  V\up{2}_\infty$ substituted for  $V_\infty\up{0}$ and the constants  also depending on  the scale $R$.
\end{remark}

The constants in the Wegner estimates  \eq{opWegner0} and \eq{opWegner} grow fast with the disorder.  To see that, consider $H_{\bom,\lambda}= H_0 + \lambda V_{\bom} $, where $H_0$ and $ V_{\bom} $ are as in Definition~\ref{defAndHam} and $\lambda >0$ is the disorder parameter. $H_{\bom,\lambda}$ can be rewritten as a crooked Anderson Hamiltonian $H\up{\lambda}_{\bom}= H_0 + V_{\bom} $ in the form of Definition~\ref{defAndHam} by replacing the probability distributions $\set{\mu_j}_{j \in \Z^d}$ by the probability distributions $\set{\mu\up{\lambda}_j}_{j \in \Z^d}$, where
$\mu\up{\lambda}_j$ is the probability distribution of the random variable $\lambda \omega_j$, that is, 
\beq\label{defmul}
\mu\up{\lambda}_j(B) = \mu_j(\lambda^{-1}B) \qtx{for all Borels sets} B\subset \R.
\eeq
We clearly have $S_{\mu\up{\lambda}_j}(t) =S_{\mu_j}(\frac t{\lambda})$, and 
it follows from \eq{mu} that
\beq \label{mulambda}
 \supp\mu\up{\lambda}_j \subset   [0,M_\lambda ],  \qtx{where }  M_\lambda= \lambda M.
 \eeq
 Applying the  Wegner estimates 
\eq{opWegner0} and \eq{opWegner} to $H_{\bom,\lambda}$  we get
(we omit the dependence on the constants from  Definition~\ref{defAndHam})
\begin{align}
\label{opWegner011}
\E\set{ \tr P_{\bom,\lambda,\La }(I) } &\le  C_{E_0}
 \pa{1+ \lambda^{2^{2 + \frac {\log d}{\log 2}}}}S_\La(\lambda^{-1}\abs{I})\abs{\La} \;\;  &\text{from \eq{opWegner0}},\\ \label{opWegner11}
 \E\set{ \tr P_{\bom,\lambda,\La }(I) } &\le  C_{E_0} \e^{c_{E_0}\pa{1+\lambda^{\frac 2 3}}}
  S_\La(\lambda^{-1}\abs{I})\abs{\La}   & \text{from \eq{opWegner}}.
\end{align}
The constants in these  Wegner estimates grow as the disorder increases.

The Wegner estimate  \eq{opWegner011} is what one gets for the usual Anderson Hamiltonian from  \cite{CHK2} without further assumptions.
But if the crooked  Anderson Hamiltonian satisfies the covering condition
$U\up{\La}\ge \alpha \Chi_{\La}$ for some  $\alpha>0$,
the UCPSP \eq{UCPSP} holds trivially on $\L^2(\La)$ for all intervals $I$ with $H=H_{0,\La}$ or $H=H_{\bom,\La}$,  $W=U\up{\La}$, and $\kappa=\alpha$, so, either  proceeding as in \cite{CH} if we use \eq{UCPSP} with $H=H_0$, or using Lemma~\ref{lemWeg} if we take $H=H_\bom$ in \eq{UCPSP},  we get an  optimal Wegner estimates of the form
\beq\label{opWegnercovcond}
\E\set{ \tr P_{\bom,\La }(I) } \le  C_{d, \delta_+,\alpha, V\up{0}_\infty,E_0}
  S_\La(\abs{I})\abs{\La}.
\eeq
Note that the constant does not depend on $M$, so introducing the disorder parameter $\lambda$ we get
\beq\label{opWegnercovcondlamb}
 \E\set{ \tr P_{\bom,\lambda,\La }(I) }  \le  C_{d, \delta_+,\alpha, V\up{0}_\infty,E_0}
  S_\La(\lambda^{-1}\abs{I})\abs{\La}.
\eeq
In other words, the constant in the  Wegner estimate improves as the disorder increases.

Up to now  an estimate like \eq{opWegnercovcondlamb}   had not been proven for Anderson Hamiltonians without  the covering condition.  While we are not able to  prove this estimate at all energies without the covering condition, we can prove them at the bottom of the the spectrum, a new result even for the usual (ergodic) Anderson Hamiltonian.

We write 
$H_\La\up{D}$ to denote the  restriction of  a Schr\"odinger operator $H$   to    the  box $\La$ with Dirichlet  boundary condition, and set $P\up{D}_{\La}(B):=\Chi_{B}(H\up{D}_{\La})$.  We recall that Dirichlet boundary condition
implies  $\inf \sigma(H_{\La}\up{D}) \ge \inf \sigma(H)$.

Given  a crooked Anderson Hamiltonian $H_{\bom}$, we define finite volume operators 
$H\up{D}_{\bom,\La } = H_{0,\La}\up{D} + V_{\bom}\up{\La}$, and let $P\up{D}_{\bom,\La}(B):=\Chi_{B}(H\up{D}_{\bom,\La } )$.
We set $H(t)= H_0 + t u_-W$ for $t\ge 0$, and note
\beq\label{Dirichletbound}
0\le E(t): = \inf \sigma(H(t)) \le E\up{D}_\La(t):= \inf \sigma(H_{\La}\up{D}t)).
\eeq
  By our normalization  $E(0)=0$, and it follows from the min-max principle  that  $0\le E(t_2)- E(t_1)\le(t_2- t_1)u_-$ for $0\le t_1 \le t_2$.
We may thus  define  
\beq
E(\infty):=\lim_{t\to \infty} E(t)=\sup
_{t\ge 0} E(t)  \in [0,\infty].
\eeq
 If  $W=I$ we have  $E(\infty)=\infty$.  But if not, that is, if   $\Upsilon =\R^d \setminus  \overline{\cup_{j\in \Z^d }B(y_j,\delta_-)}\not= \emptyset $, letting  $H\up{D}_{0,\Upsilon}$ denote the restriction of $H_0$ to  $\Upsilon$ with Dirichlet boundary condition, we get
\beq
E(t) \le E(\Upsilon):= \inf \sigma(H\up{D}_{0,\Upsilon})< \infty \;\;\text{for}\;\;  t \ge 0 \quad \Longrightarrow
\quad E(\infty) \le E(\Upsilon)<\infty.
\eeq

More importantly, Rojas-Molina and Veseli\' c proved that $E(\infty)>0$ \cite[Theorem~4.9]{RV}, \cite[Theorem~A.3.1]{R}.
By a similar argument, we establish strictly positive lower bounds for  $E(t)$ and $E(\infty)$ in Lemma~\ref{lemEinfty}.

 \begin{theorem}\label{thmWegbottom} Let $H_{\bom}$ be a crooked Anderson Hamiltonian.    Then $E(\infty) >0$. Let $E_{1}\in ]0, E(\infty)[$, so  we have 
 \beq\label{kappabottom}
\kappa = \kappa(H_0,u_-W,E_1)= \sup_{s>0; \; E(s) >E_1} \frac {E(s) -E_{1}}s   >0,
\eeq
and consider a box   $\La=\La_L(x_0)$ with $x_0 \in \Z^d$ and $L\in \N_{\mathrm{odd}}$, $L \ge 2 + \delta_+$.   Then
 \beq \label{PVPb}
P\up{D}_{\bom,\La }( ]-\infty,E_{1}])U\up{\La}P\up{D}_{\bom,\La }( ]-\infty,E_{1}]) \ge \kappa P\up{D}_{\bom,\La }( ]-\infty,E_{1}]),
\eeq
and for any closed interval $I \subset ]-\infty,E_{1}]$ we have
\begin{align}\label{preWegner}
\E\set{ \tr P\up{D}_{\bom,\La }(I) }  \le  C_{d, \delta_+, V\up{0}_\infty}
   \pa{\kappa^{-2} (1 +E_1)}^{2^{1 + \frac {\log d}{\log 2}} }S_\La(\abs{I})\abs{\La}.
    \end{align}
    In particular, for all disorder $\lambda >0$ we have
  \begin{align}\label{preWegnerdis}
\E\set{ \tr P\up{D}_{\bom,\lambda,\La }(I) }  \le  C_{d, \delta_+, V\up{0}_\infty}
   \pa{\kappa^{-2} (1 +E_1)}^{2^{1 + \frac {\log d}{\log 2}} }S_\La(\lambda^{-1}\abs{I})\abs{\La}.
    \end{align}  
   for any closed interval $I \subset  ]-\infty,E_{1}]$  
\end{theorem}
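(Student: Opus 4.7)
The plan is to establish the theorem in three stages. First, the positivity $E(\infty) > 0$ is the content of Lemma~\ref{lemEinfty}. Granting this, for any $E_1 \in ]0, E(\infty)[$ the set $\set{s > 0 : E(s) > E_1}$ is nonempty, since otherwise $E(\infty) = \sup_{t \ge 0} E(t) \le E_1$ would contradict the choice of $E_1$; hence the supremum in \eqref{kappabottom} is taken over a nonempty set of strictly positive numbers, yielding $\kappa > 0$.

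The central step is the projection inequality \eqref{PVPb}. The key algebraic input is the decomposition on $\L^2(\La)$, valid for every $s > 0$,
\beq
H_{\bom,\La}\up{D} = \pa{H_{0,\La}\up{D} + s u_- W\up{\La}} - s u_- W\up{\La} + V_{\bom}\up{\La}.
\eeq
By \eqref{Dirichletbound} applied to the auxiliary operator $H(s) = H_0 + s u_- W$, the bracketed term is bounded below by $E\up{D}_\La(s) \ge E(s)$, and $V_{\bom}\up{\La} \ge 0$, so
\beq
H_{\bom,\La}\up{D} \ge E(s) - s u_- W\up{\La}.
\eeq
Pairing this with an arbitrary $\psi \in \Ran P\up{D}_{\bom,\La}(]-\infty, E_1])$ and using the defining bound $\scal{\psi, H_{\bom,\La}\up{D}\psi} \le E_1 \norm{\psi}^2$ gives $\scal{\psi, s u_- W\up{\La} \psi} \ge (E(s) - E_1)\norm{\psi}^2$. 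Dividing by $s$, taking the supremum over admissible $s$, and invoking $u_- W\up{\La} \le U\up{\La}$ from \eqref{Wprop22} yields $\scal{\psi, U\up{\La}\psi} \ge \kappa \norm{\psi}^2$, which is exactly \eqref{PVPb}.

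With this UCPSP-type bound in hand, the Wegner estimate \eqref{preWegner} follows by feeding \eqref{PVPb} into Lemma~\ref{lemWeg}, the general mechanism that converts such operator inequalities into an eigenvalue-counting bound of the form $C \cdot S_\La(\abs{I}) \abs{\La}$; the prefactor $\pa{\kappa^{-2}(1 + E_1)}^{2^{1 + \log d / \log 2}}$ parallels that of Theorem~\ref{thmWegfull}, with $\kappa$ in place of $u_- \gamma^2$ and $E_1$ in place of $E_0$. The disorder-parameter version \eqref{preWegnerdis} is then immediate: rewriting $H_{\bom, \lambda}$ as a crooked Anderson Hamiltonian with rescaled distributions $\mu_j\up{\lambda}$ as in \eqref{defmul} leaves $H_0$ and $u_- W$ untouched, so $E(\infty)$ and $\kappa$ are invariant under the rescaling, while $S_{\mu_j\up{\lambda}}(\abs{I}) = S_{\mu_j}(\lambda^{-1}\abs{I})$ supplies the rescaled concentration function. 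The main obstacle is spotting the right operator decomposition; once that is in place, everything is a clean rearrangement using only $V_\bom\up{\La} \ge 0$, the Dirichlet comparison \eqref{Dirichletbound}, and $u_- W \le U$. Notably, this route bypasses the quantitative unique continuation principle entirely, which is why the bound persists over arbitrary intervals $I \subset ]-\infty, E_1]$ without any smallness assumption on $\abs{I}$, and with constants independent of the disorder~$\lambda$.
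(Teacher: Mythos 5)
Your proposal is correct and follows essentially the same route as the paper: your inline decomposition and quadratic-form argument for \eqref{PVPb} is precisely the proof of the abstract Lemma~\ref{lemBLS} applied to $H_{0,\La}\up{D}$, $Y=u_-W\up{\La}$, $V=V_{\bom}\up{\La}$, combined with the Dirichlet comparison \eqref{Dirichletbound}, and the passage to \eqref{preWegner} and \eqref{preWegnerdis} via Lemma~\ref{lemWeg} and the $M$-independence of $\kappa$ matches the paper's argument. (Only note that the unique continuation principle is not bypassed entirely: it enters through Lemma~\ref{lemEinfty}, which you correctly cite for $E(\infty)>0$.)
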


Theorem~\ref{thmWegbottom} is proven in Section~\ref{secbottom}.  We use Lemma~\ref{lemBLS},  a slight extension of an abstract UCPSP due to Boutet de Monvel, Lenz, and Stollmann  \cite [Theorem~1.1]{BLS},   to prove \eq{PVPb}. The estimate \eq{preWegner} then follows from Lemma~\ref{lemWeg}.   Since $\kappa$ in \eq{kappabottom} does not depend on $M$,  Lemma~\ref{lemWeg} gives a constant in the Wegner estimate \eq{preWegner}  independent of $M$, so \eq{preWegnerdis} follows.

  Theorem~\ref{thmWegbottom} is the missing link for proving localization at high disorder for Anderson Hamiltonians in a fixed interval at the bottom of the spectrum.  This was previously known only with a covering condition $U\up{\La}\ge \alpha \Chi_{\La} $, where $\alpha >0$ \cite[Theorem~3.1]{GKfinvol}.   
  
  We state the theorem in the generality  of crooked Anderson Hamiltonians.   (The bootstrap  multiscale analysis can be adapted for crooked  Anderson Hamiltonians \cite{R1,R}.)  By complete localization on an interval $I$ we mean that for all $E\in I$ there exists $\delta(E)>0$ such that  we can perform the bootstrap multiscale analysis on the interval $(E-\delta(E),E+\delta(E))$, obtaining Anderson and dynamical localization; see \cite{GKboot,GKfinvol,GKsudec}.

\begin{theorem}\label{thmloc} Let $H_{\bom,\lambda}$ be a crooked Anderson Hamiltonian with disorder $\lambda>0$, and suppose  the single-site probability distributions $\{ \mu_j\}_{j\in
\Z^d}$ satisfy  $S(t) := \sup_{j \in \Z^d}    S_{\mu_j}(t) \le C t^\theta$ for all  $t\ge 0$, where $\theta \in ]0,1]$ and $C$ is a constant.   Given  $E_{1}\in ]0, E(\infty)[$,  there exists $\lambda(E_1) <\infty$ (depending also on $d,V\up{0}_\infty, u_-,\delta_\pm, U,\theta,C$), such that $H_{\bom,\lambda}$ exhibits complete localization on the interval $[0, E_1[$ for all $\lambda \ge \lambda(E_1)$.
\end{theorem}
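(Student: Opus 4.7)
The plan is to feed the Wegner estimate of Theorem~\ref{thmWegbottom} together with a high-disorder initial length scale estimate into the Germinet--Klein bootstrap multiscale analysis, using Dirichlet finite-volume restrictions throughout. The two non-standard features---non-ergodicity and the absence of a covering condition---are handled respectively by the crooked MSA of \cite{R1,R} and by the $\lambda$-independence of $\kappa$ in \eq{kappabottom}.

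First I extract the Wegner input. Since $S_\La(t)\le S(t)\le Ct^\theta$, estimate \eq{preWegnerdis} simplifies to
\begin{equation*}
\E\{\tr P\up{D}_{\bom,\lambda,\La}(I)\}\le C_{E_1}\,\lambda^{-\theta}\,\abs{I}^\theta\,\abs{\La}
\end{equation*}
for all closed $I\subset\,]-\infty,E_1]$ and admissible boxes $\La$, with $C_{E_1}$ independent of $\lambda$ and $\La$. The decisive point is that $\kappa$ in \eq{kappabottom} is built from $H_0$ and $u_- W$ alone: raising the disorder shrinks the Wegner constant like $\lambda^{-\theta}$.

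Next I produce the initial scale estimate. Choose $\eta>0$ with $E_1+2\eta<E(\infty)$ and $t^*>0$ with $E(t^*)\ge E_1+2\eta$. On any admissible box $\La=\La_{L_0}(x_0)$, on the event $\{\omega_j\ge t^*/\lambda\text{ for every }j\in\widehat\La\}$ one has, pointwise on $\La$,
\begin{equation*}
\lambda V\up{\La}_{\bom}(x)\;\ge\;\lambda\sum_{j\in\widehat\La}\tfrac{t^*}{\lambda}\,u_-\Chi_{B(y_j,\delta_-)}(x)\;=\;u_-\,t^*\,W(x),
\end{equation*}
so the variational principle combined with \eq{Dirichletbound} applied to $H(t^*)=H_0+t^*u_-W$ gives $\inf\sigma(H\up{D}_{\bom,\lambda,\La})\ge E(t^*)\ge E_1+2\eta$. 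The Hölder hypothesis yields $\mu_j([0,t^*/\lambda])\le C(t^*)^\theta\lambda^{-\theta}$ for each $j$, hence
\begin{equation*}
\P\{\inf\sigma(H\up{D}_{\bom,\lambda,\La})\ge E_1+2\eta\}\;\ge\;1-C'L_0^d\,\lambda^{-\theta}.
\end{equation*}
On this event a standard Combes--Thomas estimate supplies exponential decay of the Dirichlet Green's function $G\up{D}_{\bom,\lambda,\La}(E;x,y)$ for every $E\in[0,E_1+\eta]$, with a mass $m(\eta)>0$ independent of $\lambda$ and $L_0$.

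Given $E\in[0,E_1[\,$, pick $\eta$ with $E+\eta<E_1$, then choose $L_0\in\N_{\mathrm{odd}}$ and the polynomial exponent $p$ required by the input hypotheses of the bootstrap multiscale analysis of \cite{GKboot}, and finally take $\lambda(E_1)$ so large that $C'L_0^d\lambda^{-\theta}\le L_0^{-p}$ for all $\lambda\ge\lambda(E_1)$. Both the Wegner input (uniform in $\La$) and the initial scale input then hold on a neighborhood $(E-\delta(E),E+\delta(E))\subset\,]-\infty,E_1[\,$ for every $\lambda\ge\lambda(E_1)$, and the multiscale analysis delivers Anderson and dynamical localization there. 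The main obstacle is checking that the bootstrap scheme of \cite{GKboot,GKfinvol,GKsudec}, designed for ergodic operators, transfers to the crooked non-ergodic setting; this is precisely the adaptation referenced in \cite{R1,R} in the text preceding the theorem.
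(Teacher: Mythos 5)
Your proposal is correct and follows essentially the same route as the paper: the paper likewise derives the high-probability event $\{H\up{D}_{\bom,\lambda,\La}\ge E_2\}$ by the union bound over $\{\omega_j< t^*/\lambda\}$ (its estimate \eqref{bottomest}) together with the disorder-improved Wegner bound \eqref{preWegnerdisbottom1111} from Theorem~\ref{thmWegbottom}, and then feeds these into the (crooked-adapted) bootstrap multiscale analysis by following the proof of \cite[Theorem~3.1]{GKfinvol}. Your version merely spells out the Combes--Thomas/initial-length-scale step and the choice of $L_0$, $p$, $\lambda(E_1)$ that the paper delegates to that citation.
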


Theorem~\ref{thmloc}  is proven in Section~\ref{secbottom}.

\section{Unique continuation principle for spectral projections}\label{secUCPSP}

In this section we prove  Theorem~\ref{thmUCPSP}. We start by recalling the quantitative unique continuation principle as given in \cite[Theorem~3.2]{BKl}.

\begin{theorem}\label{thmucp} Let  ${\Omega}$ be an  open subset  of $\R^d$ and consider    a real measurable function $V$ on ${\Omega}$ with $\norm{V}_{\infty} \le K <\infty$.   Let 
$\psi \in\mathrm{H}^2({\Omega})$ be real valued and  let  ${\zeta} \in \L^2({\Omega})$  be defined by
\beq \label{eq}
-\Delta {\psi} +V{\psi}={\zeta}  \qtx{a.e.\  on} \Omega.
\eeq
 Let   ${\Theta} \subset {\Omega}$  be a bounded measurable set where $\norm{\psi_{\Theta}}_2 >0$. 
Set
\beq \label{defRx0}
{Q}(x,\Theta):= \sup_{y \in \Theta } \abs{y - x} \qtx{for} x \in {\Omega}.
\eeq
Consider $x_0 \in {\Omega}\setminus \overline{\Theta}$ such that
\beq
  \label{xR}
{Q}={Q}(x_0,\Theta)\ge  1 \qtx{and} B(x_0, 6{Q}+ 2)\subset {\Omega}.
\eeq
Then, given
\beq \label{delta}
0<  \delta \le \min\set{   \dist \pa{x_0, {\Theta}},\tfrac 1 {2}},
\eeq
we have
\begin{align} \label{UCPbound}
 \pa{\frac \delta{Q}}^{m_d \pa{1 + K^{\frac 2 3}}\pa{ {Q}^{\frac 43}  +  \log \frac{\norm{ {\psi}_{{\Omega}}}_{2}} {\norm{ {\psi}_{{\Theta}}}_2}}}\norm{ {\psi}_{{\Theta}}}^2_2  \le   \norm{ {\psi}_{x_0,\delta}}^2_2 + \delta^2 \norm{{\zeta_{{\Omega}}}}_2^2,
\end{align}
where $m_d>0$ is a constant depending only on $d$.
\end{theorem}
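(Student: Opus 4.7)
The plan is to derive \eq{UCPbound} by coupling a Carleman estimate of Bourgain--Kenig type, singular at $x_0$, with a three-shell cutoff of $\psi$ inside the large ball $B(x_0,6Q+2)\subset \Omega$, and then optimizing the Carleman parameter $\tau$ against the available geometric data.

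The first step is a weighted Carleman inequality with radial weight of the form $w_\tau(x) = \abs{x-x_0}^{-\tau}\e^{\vphi(\abs{x-x_0})}$, where $\vphi$ is a carefully tuned profile (as in \cite{BK}). The target estimate is, for $f \in \Cc^\infty\pa{B(x_0,6Q+2)\setminus \set{x_0}}$ and $\tau$ above a threshold,
\beq \label{carlprop}
\tau^{3}\!\int \abs{x-x_0}^{-1-2\tau}\abs{f}^{2}\,\di x \le C \!\int \abs{x-x_0}^{3-2\tau}\abs{(-\Delta+V)f}^{2}\,\di x.
\eeq
The threshold $\tau \ge c_d \pa{1+K^{\frac 23}}\pa{1+Q^{\frac 43}}$ is dictated by two constraints. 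The $K^{\frac 23}$ comes from interpolating $\norm{Vf}$ against $\tau^{\frac 32}\norm{\abs{x-x_0}^{-\frac 12}f}$ on the LHS so that the potential piece of $(-\Delta+V)f$ can be absorbed. The $Q^{\frac 43}$ is the characteristic Bourgain--Kenig rate ensuring that $w_\tau$ is strictly decreasing along the radial direction across the entire annulus of radius $\sim Q$.

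The second step is localization. Apply \eq{carlprop} to $f = \chi\psi$, where $\chi$ is a smooth cutoff equal to $1$ on the annulus $\set{\delta \le \abs{x-x_0}\le 6Q+1}$ and vanishing on $B(x_0,\tfrac\delta 2)$ and outside $B(x_0,6Q+2)$. Since $(-\Delta+V)\psi = \zeta$, one has $(-\Delta+V)(\chi\psi) = \chi\zeta + [-\Delta,\chi]\psi$. The source contributes a term bounded by $C\delta^{2-2\tau}\norm{\zeta_\Omega}_2^{2}$; the commutator $[-\Delta,\chi]\psi$ is supported on an inner shell near $\partial B(x_0,\delta)$ (producing $C\delta^{-1-2\tau}\norm{\psi_{x_0,\delta}}_2^{2}$) and an outer shell (producing $CQ^{3-2\tau}\norm{\psi_\Omega}_2^{2}$). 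On the LHS of \eq{carlprop}, restrict the domain of integration to $\Theta$ (which by \eq{defRx0}--\eq{xR} lies in the annulus at distance $\le Q$ from $x_0$), so that the weight is $\ge C Q^{-1-2\tau}$ there. Rearranging yields a quantitative three-balls inequality
\beq \label{3balls}
\tau^{3}Q^{-1-2\tau}\norm{\psi_\Theta}_2^{2} \le C Q^{3-2\tau}\norm{\psi_\Omega}_2^{2} + C\delta^{-1-2\tau}\pa{\norm{\psi_{x_0,\delta}}_2^{2} + \delta^{2}\norm{\zeta_\Omega}_2^{2}}.
\eeq

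The final step is to choose $\tau$ optimally. Dividing through by $\delta^{-1-2\tau}$ and rewriting, \eq{3balls} has the schematic form $\norm{\psi_\Theta}_2^{2} \le C^{\tau}(\delta/Q)^{2\tau}\norm{\psi_\Omega}_2^{2} + C_\tau\pa{\norm{\psi_{x_0,\delta}}_2^{2} + \delta^{2}\norm{\zeta_\Omega}_2^{2}}$. Pick $\tau \sim \pa{1+K^{\frac 23}}\pa{Q^{\frac 43} + \log\frac{\norm{\psi_\Omega}_2}{\norm{\psi_\Theta}_2}}$; with this choice the first term on the right becomes at most $\tfrac 12 \norm{\psi_\Theta}_2^{2}$ and is absorbed on the left, leaving exactly the exponent $m_d\pa{1+K^{\frac 23}}\pa{Q^{\frac 43}+\log(\norm{\psi_\Omega}_2/\norm{\psi_\Theta}_2)}$ stated in \eq{UCPbound}. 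The main obstacle is the design of the Carleman weight $\vphi$: one must show that the $\tau^{3}$-gain on the LHS of \eq{carlprop} is sharp enough to balance both the $K$-absorption (producing exactly the $\frac 23$-exponent on $K$) and the monotonicity failure of $\abs{x-x_0}^{-\tau}$ at scale $Q$ (producing exactly the $\frac 43$-exponent on $Q$). Anything worse than these exponents would render the downstream application (Theorem~\ref{thmUCPSP}) unusable, since it is $\gamma$ in \eq{defgamma} that must remain strictly positive after summing over unit boxes in the subsequent arguments.
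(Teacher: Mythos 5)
You should first be aware that the paper contains no proof of this statement: Theorem~\ref{thmucp} is quoted verbatim from Bourgain--Klein \cite[Theorem~3.2]{BKl}, the only added remark being that the condition $\delta\le\frac12$ (instead of $\delta\le\frac1{24}$ in \cite{BKl}) is harmless because any fixed upper bound on $\delta$ only changes the constant $m_d$. Your sketch retraces the Carleman-estimate strategy of Bourgain--Kenig \cite{BK} that underlies the cited result -- singular weight centered at $x_0$, cutoff producing an inner shell, an outer shell and a source term, restriction of the weighted left-hand side to $\Theta$, and optimization of the Carleman parameter against $\log\frac{\norm{\psi_\Omega}_2}{\norm{\psi_\Theta}_2}$ -- so the architecture is the right one and matches the source the paper relies on.

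As a proof, however, there is a genuine gap: the inequality \eq{carlprop} is the entire technical content of the theorem, and you assert it rather than prove it, in a form that is not the one actually available. The Escauriaza--Vessella-type estimate used in \cite{BK,BKl} holds for $f\in \Cc^\infty(B(0,10)\setminus\{0\})$ on a \emph{fixed} ball, with weight $w=\phi(\abs{x})$ a bounded perturbation of $\abs{x}$, right-hand side weight $w^{2-2\alpha}$, and no potential; the pure-power, scale-invariant version you write down, valid on a ball of radius $6Q+2$ with a constant independent of $Q$ and with $V$ already absorbed into the threshold, is not an off-the-shelf fact, and the perturbation $\e^{\vphi}$ that produces the $\tau^3$ gain is only under control on a bounded region -- which is precisely why it cannot be scale invariant. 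In the actual argument both critical exponents come from rescaling the configuration $\{x_0,\Theta\}$ to unit size: the potential bound scales as $K\mapsto CQ^2K$, and the absorption condition $\alpha\gtrsim(Q^2K)^{2/3}$ is what yields $Q^{4/3}K^{2/3}$; your explanation of $Q^{4/3}$ via ``monotonicity failure of $\abs{x-x_0}^{-\tau}$ at scale $Q$'' does not substitute for this. Two smaller points: the Carleman inequality is stated for compactly supported smooth functions, so applying it to $\chi\psi$ with $\psi\in\mathrm{H}^2$ requires an approximation step; and your final choice of $\tau$ gives an exponent of the form $(1+K^{2/3})Q^{4/3}+\log\frac{\norm{\psi_\Omega}_2}{\norm{\psi_\Theta}_2}$ plus absorption constants, so you should note explicitly that this is dominated by the product form $m_d(1+K^{2/3})\bigl(Q^{4/3}+\log\frac{\norm{\psi_\Omega}_2}{\norm{\psi_\Theta}_2}\bigr)$ appearing in \eq{UCPbound}. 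Given that the paper itself defers to \cite{BKl}, the economical fix is to cite that theorem (or \cite[Lemma~3.10]{BK}) for the Carleman input and present only the rescaling, cutoff and parameter-optimization steps in detail.
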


 Note the condition $\delta \le \frac 1 {2}$ in \eq{delta} instead of $\delta \le \frac 1 {24}$ as in  \cite[Eq.~(3.2)]{BKl}.  All that is needed  in \eq{delta} is an upper bound $\delta\le \delta_0$;  the constant $m_d$ in  \eq{UCPbound}  then depending on $\delta_0$. 

Note that  for  $\psi \in \L^2(\Lambda)$ we have  $\psi=\psi_\Lambda$ in our notation, and hence $ \norm{\psi}_2= \norm{\psi_\Lambda}_2$.

\begin{theorem} \label{lemUCPeig} Let  $H =   -\Delta + V$ be a  Schr\"odinger operator on 
$\mathrm{L}^2(\mathbb{R}^d)$, where V is a bounded potential with $\norm{V}_{\infty}\le K$.  Fix  $\delta \in ]0,\frac 1 {2}]$, let  $\set{y_k}_{k \in \Z^d}$ be sites in $\R^d$ with  $B(y_k,\delta) \subset \La_1(k)$ for all $k\in \Z^d$. 
 Consider a box $\La=\La_L(x_0)$, where $x_0 \in \Z^d$ and $L\in \N_{\mathrm{odd}}$, $L \ge 72\sqrt{d}$.   Then for all  real-valued $\psi \in \cD(\Delta_\La) $    
we have
\begin{align} \label{UCPdelta}
 \delta^{M_d \pa{1 + K^{\frac 2 3}}} \norm{\psi_\Lambda}_2^2   \le  
\sum_{k \in \widehat{\La}}\norm{ {\psi}_{y_{k},\delta}}^2_2 +  \delta^2 \norm{\pa{(-\Delta + V)\psi}_\Lambda}_2^2,
\end{align}
where $M_d>0$ is a constant depending only on $d$.
\end{theorem}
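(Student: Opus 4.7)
The plan is to derive Theorem~\ref{lemUCPeig} from the quantitative UCP of Theorem~\ref{thmucp} by combining three ingredients: extension of $\psi$ to a larger open domain respecting the boundary condition, a local dominant-box decomposition of $\widehat{\La}$ at a $d$-dependent scale, and application of the UCP within each block using a \emph{bounded-size} ambient domain. The chief obstacle is keeping the prefactor's exponent depending only on $d$ and $K$ and not on $L$. This hinges on the observation that if the UCP ambient domain $\Omega$ is chosen local to the dominant box, then by local dominance the factor $\log(\|\tilde\psi\|_\Omega/\|\psi_\Theta\|_2)$ inside the UCP exponent is controlled by a constant depending only on $d$, rather than growing like $\log L$ as would occur if $\Omega$ were the full extended box or if dominance were taken globally over the $L^d$ unit boxes of $\La$.

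First I would extend $\psi$ and $V$ from $\La$ to $\Omega_0 := \La_{3L}(x_0)$, by antisymmetric reflection across the faces of $\partial\La$ for Dirichlet boundary conditions, or by $L$-periodic extension for periodic boundary conditions. The resulting $\tilde\psi \in \mathrm{H}^2(\Omega_0)$ satisfies $(-\Delta + \tilde V)\tilde\psi = \tilde\zeta$ on $\Omega_0$, with $\|\tilde V\|_\infty \le K$, $\|\tilde\psi\|_{\L^2(\Omega_0)}^2 \le 3^d\|\psi_\La\|_2^2$, and $\|\tilde\zeta\|_{\L^2(\Omega_0)}^2 \le 3^d\|((-\Delta+V)\psi)_\La\|_2^2$. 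Next fix an integer $R=R(d)$ large enough that a box of side $R$ accommodates the UCP geometry below (e.g.\ $R\ge 24\sqrt d+4$), and partition $\widehat\La$ into disjoint blocks $\{\mathcal{B}_a\}_{a\in\mathcal{A}}$ each a translate of $\widehat{\La_R}(0)$. Inside each block choose a locally dominant unit box $k_\ast(a)\in\mathcal{B}_a$ maximizing $k\mapsto\|\psi_{\La_1(k)}\|_2^2$ over $\mathcal{B}_a$, together with an adjacent lattice neighbor $k'(a)\in\mathcal{B}_a$ with $|k'(a)-k_\ast(a)|_\infty=1$. The hypothesis $L\ge 72\sqrt d$ ensures that for each $a$ there is an open set $\Omega_a\subset\Omega_0$ of $d$-dependent diameter containing both $\La_1(k_\ast(a))$ and $B(y_{k'(a)},6Q+2)$ for some $Q\in[1,2\sqrt d]$, and that only unit boxes inside (or close to) $\mathcal{B}_a$ overlap $\Omega_a$.

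For each $a\in\mathcal{A}$ apply Theorem~\ref{thmucp} with $\Omega=\Omega_a$, $\Theta=\La_1(k_\ast(a))$, $x_0=y_{k'(a)}$, $\delta$ as given. All hypotheses hold by construction ($Q\ge 1+\delta$, $B(y_{k'(a)},6Q+2)\subset\Omega_a$, and $\dist(y_{k'(a)},\La_1(k_\ast(a)))\ge\delta$ because $B(y_{k'(a)},\delta)\subset\La_1(k'(a))$). By local dominance $\|\tilde\psi\|_{\L^2(\Omega_a)}^2\le c_d R^d\|\psi_{\La_1(k_\ast(a))}\|_2^2$, so $\log(\|\tilde\psi\|_{\L^2(\Omega_a)}/\|\psi_{\La_1(k_\ast(a))}\|_2)\le\tfrac12\log(c_dR^d)$, a constant depending only on $d$. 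The UCP exponent $m_d(1+K^{2/3})(Q^{4/3}+\log R_a)$ is thus bounded by $M_d(1+K^{2/3})$ for some $M_d=M_d(d)$, and after absorbing a $(Q/\delta)^{O_d(1)}$ factor (possible since $\delta\le 1/2$) Theorem~\ref{thmucp} yields
\[
\delta^{M_d(1+K^{2/3})}\|\psi_{\La_1(k_\ast(a))}\|_2^2 \le c_d'\bigl(\|\psi_{y_{k'(a)},\delta}\|_2^2 + \delta^2\|\tilde\zeta\|_{\L^2(\Omega_a)}^2\bigr).
\]
Combining with $\|\psi_{\La_1(k_\ast(a))}\|_2^2\ge R^{-d}\|\psi_{\mathcal{B}_a\cap\La}\|_2^2$ and summing over $a$ (the $\mathcal{B}_a$ partition $\widehat\La$, the $\Omega_a$ have $O_d(1)$ overlap, and the $\{k'(a)\}$ are distinct indices in $\widehat\La$) gives
\[
\delta^{M_d(1+K^{2/3})}R^{-d}\|\psi_\La\|_2^2 \le c_d''\Bigl(\sum_{k\in\widehat\La}\|\psi_{y_k,\delta}\|_2^2 + \delta^2\|((-\Delta+V)\psi)_\La\|_2^2\Bigr),
\]
and absorbing the $d$-dependent prefactor $R^d c_d''$ into a slightly larger $d$-dependent $M_d$ (using $\delta\le 1/2$) completes the proof.
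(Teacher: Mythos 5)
There is a genuine gap at the step you yourself identify as the crux. You choose $k_\ast(a)$ to maximize $\|\psi_{\La_1(k)}\|_2^2$ over the block $\mathcal{B}_a$ and then claim ``by local dominance'' that $\|\tilde\psi\|_{\L^2(\Omega_a)}^2\le c_dR^d\|\psi_{\La_1(k_\ast(a))}\|_2^2$. This does not follow: the UCP forces $\Omega_a\supset B(y_{k'(a)},6Q+2)$, a set of diameter of order $\sqrt d$ centered at a point that may sit at the edge of $\mathcal{B}_a$ (the maximizer can lie anywhere in the block), so $\Omega_a$ necessarily overlaps neighboring blocks. Maximality of $k_\ast(a)$ inside $\mathcal{B}_a$ says nothing about the mass of $\tilde\psi$ on those neighboring unit boxes, which can exceed $\|\psi_{\La_1(k_\ast(a))}\|_2^2$ by an arbitrarily large factor. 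Consequently the term $\log\bigl(\|\tilde\psi_{\Omega_a}\|_2/\|\psi_{\La_1(k_\ast(a))}\|_2\bigr)$ in the exponent of Theorem~\ref{thmucp} is \emph{not} bounded by a constant depending only on $d$, and the claimed bound $\delta^{M_d(1+K^{2/3})}$ with $M_d=M_d(d)$ collapses exactly where the $L$-independence was supposed to be secured. (The remaining ingredients --- the extension to $\La_{3L}(x_0)$, the verification of \eqref{delta} and $Q\ge1$ for an adjacent site $k'(a)$, the absorption of the $(Q/\delta)^{O_d(1)}$ factor using $\delta\le\tfrac12$, and the bounded-overlap summation --- are fine.)

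The missing idea is to define dominance relative to the very window used in the UCP and then recover a fixed fraction of $\|\psi_\La\|_2^2$ by averaging. This is what the paper does, following Rojas-Molina--Veseli\'c: a site $k$ is called dominating if $\|\psi_{\La_1(k)}\|_2^2\ge\tfrac1{2(2Y)^d}\|\tilde\psi_{\La_Y(k)}\|_2^2$ with $Y\le40\sqrt d$, so the logarithmic term is bounded by construction for \emph{every} dominating site, and the bounded-overlap estimate \eqref{sumY} shows by pigeonhole that the dominating sites carry at least half of $\|\psi_\La\|_2^2$; the UCP is then applied at each dominating $k$ with $\Omega=\La_Y(k)$, $\Theta=\La_1(k)$, $x_0=y_{J(k)}$, $J(k)=k\pm2\e_1$. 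Your block scheme can be repaired along the same lines --- e.g.\ discard ``bad'' blocks whose neighboring blocks carry more than $2\cdot3^d$ times their mass, noting that such blocks hold less than half the total mass --- but that repair is precisely the dominating-boxes pigeonhole you omitted; as written, the proof is incomplete.
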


\begin{proof}Without loss of generality we take $x_0=0$, so $\La=\La_L(0)$ with  $L\in \N_{\mathrm{odd}}$, $L \ge 72\sqrt{d}$.
As in \cite[Proof of Corollary~A.2]{GKloc}, we extend $V$ and  functions $\vphi \in \mathrm{L}^2(\Lambda)$ to $\R^d$ as follows. \smallskip

\noindent{\textbf{Dirichlet boundary condition:}}
Given $\vphi \in \mathrm{L}^2(\Lambda)$, we extend it to a function $\widetilde{\vphi}\in \mathrm{L}^2_{\mathrm{loc}}(\R^d)$ by setting $\widetilde{\vphi}=\vphi$ on $\Lambda$ and $\widetilde{\vphi}=0 $ on $\partial \Lambda$, and requiring 
\beq \label{widetildephi}
\widetilde{\vphi}(x)=- \widetilde{\vphi}(x + (L  -2 \widehat{x_j}) \e_j)\qtx{for all} x \in \R^d \qtx{and} j \in \set{1,2 \dots,d},
\eeq
where $\set{\e_j}_{j =1,2\ldots,d}$ is the canonical orthonormal basis in $\R^d$, and for each  $t\in  \R$ we define $\hat{t}\in  ] -\frac L 2, \frac L 2]$ by $t =
kL + \hat{t}$ with $k \in \Z$.  We also  extend the potential $V$ to a potential $\widehat{V}$ on  $\R^d$ by  by setting $\widehat{V}=V$ on $\Lambda$ and $V=0$ on $\partial \Lambda$, and requiring that 
for all $x \in \R^d$ and $j \in \set{1,2 \dots,d}$ we have
\beq
\widehat{V}(x)=\widehat{V}(x + (L  -2 \widehat{x_j}) \e_j).
\eeq
Note that  $\|\widehat{V}\|_\infty =\norm{V}_\infty \le K$.  Moreover, 
$\psi \in \D(\Delta_\Lambda)$  implies   $\widetilde{\psi} \in \mathrm{H}^2_{\mathrm{loc}}(\R^d)$ and
\beq\label{eqwidehat}
\widetilde{(-\Delta +V) {\psi} }=   (-\Delta  + \widehat{V} ) \widetilde{\psi} .
\eeq

\noindent{\textbf{Periodic boundary condition:}}
We extend  $\vphi \in \mathrm{L}^2(\Lambda)$  and $V$ to  periodic functions $\widetilde{\vphi}$ and $\widehat{V}$ on $\R^d$ of period $L$;  note  $\|\widehat{V}\|_\infty =\norm{V}_\infty \le K$.  Moreover, 
$\psi \in \D(\Delta_\Lambda)$  implies   $\widetilde{\psi} \in \mathrm{H}^2_{\mathrm{loc}}(\R^d)$ and we have  \eq{eqwidehat}.
\smallskip

 We now  take  $Y \in \N_{\mathrm{odd}}$, $Y <\frac L 2$  (to be specified later), and note that since $L$ is odd, we have 
\beq\label{Ladecomp}
\overline{\La}=\bigcup_{k \in \widehat{\La}} \overline{\La_1(k)}.
\eeq
 It follows that  for  all $ \vphi \in \mathrm{L}^2(\Lambda)$ we have (see  \cite[Subsection~5.2]{RV})
\beq\label{sumY}
\sum_{k \in \widehat{\La}} \norm{\widetilde{\vphi}_{\La_Y(k)}}_2^2
\begin{cases} \le (2Y)^d \norm{\vphi_\La}_2^2 & \text{for Dirichlet boundary condition}\\
= Y ^d \norm{\vphi_\La}_2^2 & \text{for periodic boundary condition}
\end{cases}.
\eeq

We now fix   $\psi \in \cD(\Delta_\La) $.   Following Rojas-Molina and Veseli\' c, we  call a site $k \in \widehat{\La}$ \emph{dominating} (for $\psi$) if
\beq\label{dom}
\norm{\psi_{\La_1(k)}}_2^2 \ge\tfrac 1{2 (2Y)^{d}} \norm{\widetilde{\psi}_{\La_Y(k)}}_2^2.
\eeq
Letting  $\widehat{D}\subset \widehat{\La}$ denote the collection of dominating sites, Rojas-Molina and Veseli\' c   \cite[Subsection~5.2]{RV} observed that it follows from \eq{sumY},  \eq{dom}, and \eq{Ladecomp},  that
\beq\label{sumdom}
\sum_{k \in \widehat{D}} \norm{\psi_{\La_1(k)}}_2^2 \ge \tfrac 1 2 \norm{\psi_\La}_2^2.  
\eeq

We  define a map $J\colon  \widehat{D} \to \widehat{\La}$ by
\beq \label{defJ}
 J(k)=
\begin{cases}   k + 2 \e_1 \qtx{if}  k + 2 \e_1\in \widehat{\La}\\
k -  2 \e_1 \qtx{if}  k + 2 \e_1\notin \widehat{\La}
\end{cases}.
\eeq
Note that $J$ is well defined, 
\beq\label{Jinv}
\# J^{-1} (\set{j}) \le 2 \qtx{for all} j \in \widehat{\La},
\eeq
and  recalling \eq{defRx0},
\beq\label{Qest}
  {Q}(y_{J(k)},\La_1(k)) = \tfrac 1 2  \sqrt{24 + d} \le  \tfrac 5 2  \sqrt{d} \qtx{for all} k \in \widehat{D}.
\eeq

Choosing
\begin{align}\label{Y36}
Y= \min\set{n \in   \N_{\mathrm{odd}}; n >  2\pa{\pa{2 +\tfrac {\sqrt{d}} 2} + \pa{3 \sqrt{24 + d}  + 2}}}\le   40\sqrt{d},
\end{align}
we have  $Y < \frac L 2$ and 
\beq
B(y_{J(k)}, 6{Q}(y_{J(k)},\La_1(k))+2)\subset \La_Y(k)  \qtx{for all} k \in \widehat{D}.
\eeq
For each $k\in \widehat{D}$  we may thus   apply Theorem~\ref{thmucp}   with $\Omega= \La_Y(k) $ and $\Theta= \La_1(k) $, using \eq{Qest} and \eq{dom}, obtaining    
\begin{align} \label{UCPbound3}
 \delta^{m^{\prime}_d\pa{1 + K^{\frac 2 3}}}\norm{ {\psi}_{{\La_1(k)}}}^2_2  \le   \norm{ {\psi}_{y_{J(k)},\delta}}^2_2 + \delta^2 \norm{{\widetilde{\zeta}_{\La_Y(k) }}}_2^2,
\end{align}
where  $\zeta= (-\Delta +V)\psi$ and  $m^{\prime}_d>0$ is a constant depending only on $d$.  Summing over $k\in \widehat{D}$ and using \eq{sumdom}, \eq{Jinv},   \eq{sumY}, and \eq{Y36},  and we get
\begin{align}
\tfrac 1 2  \delta^{m^{\prime}_d\pa{1 + K^{\frac 2 3}}}\ \norm{\psi_\La}_2^2 &\le  
2\sum_{k \in \widehat{\La}}\norm{ {\psi}_{y_{k},\delta}}^2_2 +  (2Y)^d \delta^2 \norm{{{\zeta_\La}}}_2^2\\
& \le 2\sum_{k \in \widehat{\La}}\norm{ {\psi}_{y_{k},\delta}}^2_2 +  (80\sqrt{d})^d \delta^2 \norm{{{\zeta_\La}}}_2^2,  \notag\end{align}
so \eq{UCPdelta} follows.
\end{proof}

\begin{comment}  The final version of  \cite {RV}  uses a map similar to \eq{defJ}, see \cite[Subsection~5.3]{RV}.
\end{comment}

We are now ready to prove  Theorem~\ref{thmUCPSP}.

\begin{proof}[Proof of  Theorem~\ref{thmUCPSP}] Given $E_{0}>0$, set $K =K(V,E_0)= 2\norm{V}_{\infty}+ E_{0} $, and   let $\gamma$ be given by \eq{defgamma}, where $M_d>0$ is the constant in Theorem~\ref{lemUCPeig}.  Let $I \subset ]-\infty, E_{0}]$ be a closed interval with $\abs{I} \le 2\gamma $.  Since  $\sigma (H_\La)\subset [-\norm{V }_\infty,\infty[$ for any box $\La$,  without loss of  generality we assume $I=[E-\gamma ,E + \gamma ]$ with $E \in [ -\norm{V }_\infty,E_{0}]$, so  \beq
\norm{V -E}_\infty \le \norm{V }_\infty + \max \set{E_0, \norm{V }_\infty}\le K.
\eeq
Moreover, for any 
box $\La$  we have  
 \beq\label{IgammaI}
   \norm{{\pa{H_\La -E}\psi}}_{2}\le  \gamma \norm{\psi}_{2} \qtx{for all} \psi \in \Ran  {\Chi}_I(H_\La).
  \eeq

Let $\La$ be a box as in Theorem~\ref{lemUCPeig} and $\psi \in \Ran  {\Chi}_I(H_\La)$. If $\psi$  is real-valued,  it follows from Theorem~\ref{lemUCPeig}, \eq{defgamma}, and \eq{IgammaI} that
\begin{align} \label{UCPdelta2}
 2\gamma^2\norm{\psi}_2^2   \le  
\sum_{k \in \widehat{\La}}\norm{ {\psi}_{y_{k},\delta}}^2_2 +  \gamma^2 \norm{\psi}_2^2, 
\end{align}
yielding \beq\label{UCPdelta248}
\gamma^2\norm{\psi}_2^2   \le \sum_{k \in \widehat{\La}}\norm{ {\psi}_{y_{k},\delta}}^2_2= \norm{{W \psi}}_2^2,
\eeq
where the equality follows from  \eq{defWthm}.  For arbitrary $\psi \in \Ran  {\Chi}_I(H_\La)$, we write
$\psi = \Rea \psi + i\Ima \psi$, and note that $\Rea \psi, \Ima \psi  \in \Ran  {\Chi}_I(H_\La)$, $\norm{\psi}_2^2= \norm{\Rea \psi}_2^2 + \norm{\Ima\psi}_2^2$, and, since $W$ is real-valued, $\norm{W\psi}_2^2= \norm{W\Rea \psi}_2^2 + \norm{W\Ima\psi}_2^2$.  Recalling     $W^2=W$, we conclude that
\beq\label{UCPdelta2489}
\gamma^2 \scal{\psi,\psi}=\gamma^2\norm{\psi}_2^2   \le  \norm{{W \psi}}_2^2= \scal{\psi,W \psi}
\eeq
for all $\psi \in \Ran  {\Chi}_I(H_\La)$, proving \eq{chivchi}.  
\end{proof}

\section{Wegner estimates}\label{secWeg}

In this section we prove Theorems~\ref{thmWegfree} and \ref{thmWegfull}.  

Note that for a   crooked Anderson Hamiltonian $H_\bom$ and a box $\La$, we always  have
 \beq\label{finsp}
 \sigma(H_{0,\La})\subset [- \alpha ,\infty[ \qtx{and} \sigma(H_{\bom,\La})\subset [-\alpha ,\infty[,
 \eeq
where  $\alpha=0$ for Dirichlet boundary condition and $\alpha= V\up{0}_\infty$ for periodic boundary condition.

\begin{proof}[Proof of Theorem~\ref{thmWegfree}]

Let $H_\bom$ be a be a crooked Anderson Hamiltonian.
Given $E_{0}>0$, set $K_0=E_{0} + 2V\up{0}_\infty $, and define $\gamma_0$ by \eq{defgamma0}.   We apply Theorem~\ref{thmUCPSP} with $H=H_{0}$ and $W$ as in \eq{defWU}, concluding that for 
 any closed  interval $I \subset ]-\infty ,E_{0}]$ with $\abs{I} \le 2\gamma_0 $ and  any box $\La$ as in the hypotheses of the theorem, we have, using also \eq{Wprop},
 \beq \label{chivchiWegfree}
 \Chi_{I}(H_{0,\La})\le  \gamma_0^{-2} \Chi_{I}(H_{0,\La}) W\up{\La} \Chi_{I}(H_{0,\La}) \le u_-^{-1}  \gamma_0^{-2}\Chi_{I}(H_{0,\La}) U\up{\La}  \Chi_{I}(H_{0,\La}).
\eeq

In view of \eq{finsp}, it suffices to 
take $I\subset  [-\alpha ,E_0]$.  We can now  follow the proof in   \cite{CHK2}, using \eq{chivchiWegfree} instead of \cite[Theorem~2.1]{CHK2}, and keeping careful track of the dependence of the constants on the relevant parameters,  obtaining  \eq{opWegner0}.
\end{proof}

We now turn to the proof of Theorem~\ref{thmWegfull}.  We start by showing that, given
 a crooked  Anderson Hamiltonian $H_{\bom}$,  the  UCPSP \eq{UCPSP}, with $H=H_{\bom}$,   $W=U$, and   a constant $\kappa$ independent of $\bom$ implies   a Wegner estimate.

 \begin{lemma} \label{lemWeg} Let $H_{\bom}$ be a crooked  Anderson Hamiltonian.   Let $I \subset ]-\infty,E_{0}]$
 be a closed interval and  $\Lambda=\La_L(x_0)$ a box centered at  $x_0\in \Z^d$ with $L\in \N_{\mathrm{odd}}$, $L\ge  2 + \delta_+ $. Suppose there exists a constant $\kappa>0$ such that
\beq \label{UCPI}
P_{\bom,\La }(I)U\up{\La}  P_{\bom,\La }(I)\ge  \kappa P_{\bom,\La }(I) \quad \text{with probability one}.
\eeq
Then 
\begin{align}\label{preWegner99}
\E \set{\tr P_{\bom,\La }(I) } \le  C_{d, \delta_+, V\up{0}_\infty}
   \pa{\kappa^{-2} (1 +E_0)}^{2^{1 + \frac {\log d}{\log 2}} }S_\La(\abs{I})\abs{\La}.
    \end{align}

 \end{lemma}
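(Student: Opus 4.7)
The plan is to derive \eqref{preWegner99} from the UCPSP hypothesis \eqref{UCPI} by running the spectral-averaging scheme of Combes--Hislop--Klopp \cite{CHK2}, with $U\up{\La}$ taking the place of their periodic background potential and with \eqref{UCPI} replacing their Floquet-theoretic UCPSP. In this sense the proof is essentially a transcription of \cite[Section~4]{CHK2}, the only new ingredient being the \emph{hypothesis} \eqref{UCPI} in place of an ergodic UCPSP. The restriction $L\ge 2+\delta_+$ is used only to ensure $\supp u_j \subset \La$ for every $j\in \widehat\La$, so that the decomposition $U\up{\La}=\sum_{j\in \widehat\La} u_j$ is clean.

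I would start from \eqref{UCPI}, using $P_{\bom,\La}(I)^2=P_{\bom,\La}(I)$ and the cyclicity of the trace to write
\begin{equation*}
\tr P_{\bom,\La}(I)\ \le\ \kappa^{-1}\sum_{j\in \widehat\La}\tr\!\bigl(u_j^{1/2}\,P_{\bom,\La}(I)\,u_j^{1/2}\bigr).
\end{equation*}
For each $j\in \widehat\La$ I would condition on the remaining couplings $\set{\omega_k}_{k\neq j}$, write $H_{\bom,\La}=A_j+\omega_j u_j$ with $A_j$ independent of $\omega_j$, and apply Stollmann's spectral averaging lemma to the rank-type perturbation $\omega_j u_j$ with coupling distribution $\mu_j$, obtaining a bound of the form
\begin{equation*}
\E_{\omega_j}\!\bigl\{\tr\!\bigl(u_j^{1/2}\,P_{\bom,\La}(I)\,u_j^{1/2}\bigr)\bigr\}\ \le\ S_{\mu_j}(\abs{I})\,\tr\!\bigl(u_j^{1/2}\,g(H_{\bom,\La})\,u_j^{1/2}\bigr),
\end{equation*}
where $g\ge \Chi_I$ is a nonnegative weight of bounded support contained in a fixed neighbourhood of $]-\infty,E_0]$. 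Summing over $j$, using $\norm{U\up{\La}}_\infty \le (2+\delta_+)^d$ together with the Weyl estimate $\tr \Chi_{]-\infty,E_0]}(H_{\bom,\La})\le C_d(1+E_0+V\up{0}_\infty)^{d/2}\abs{\La}$ (recall $\sigma(H_{\bom,\La})\subset [-V\up{0}_\infty,\infty[$), produces a preliminary Wegner estimate with a factor $(1+E_0)^{d/2}$ and a single $\kappa^{-1}$.

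To upgrade this to the precise exponent $2^{1+\log d/\log 2}$ of $\kappa^{-2}(1+E_0)$ appearing in \eqref{preWegner99}, I would bootstrap: reinsert \eqref{UCPI} inside the trace $\tr(u_j^{1/2} P_{\bom,\La}(I) u_j^{1/2})$, split $U\up{\La}$ again into its summands $u_k$, and apply a Schwarz / Schatten--H\"older inequality on the resulting traces, exactly as in the proof of \cite[Theorem~1.3]{CHK2}. Each such iteration doubles the exponent of $P_{\bom,\La}(I)$ sitting inside the trace and costs one extra factor of $\kappa^{-2}(1+E_0)$; after roughly $\log_2 d$ iterations the exponent of the projection exceeds $d$, the remaining trace is absorbed by Weyl's bound, and the accumulated prefactor matches exactly $\pa{\kappa^{-2}(1+E_0)}^{2^{1+\log d/\log 2}}$. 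The main obstacle is the bookkeeping of the constants through this iteration, so that the final power comes out to be precisely $2^{1+\log d/\log 2}$ and not a worse dependence; once this is tracked, everything else is routine given \eqref{UCPI}.
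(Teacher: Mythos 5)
Your overall plan --- rerun the Combes--Hislop--Klopp machinery with the hypothesis \eqref{UCPI} in place of their ergodic UCPSP --- is indeed the paper's strategy, but your execution has a genuine gap at the central step. Spectral averaging (Stollmann's lemma, or \cite[Lemma~2.1]{CHK2}) is a quadratic-form/operator statement: $\E_{\omega_j}\set{u_j^{1/2}P_{\bom,\La}(I)u_j^{1/2}}\le S_{\mu_j}(\abs{I})$ as forms. It does not yield the trace inequality you assert, $\E_{\omega_j}\set{\tr\pa{u_j^{1/2}P_{\bom,\La}(I)u_j^{1/2}}}\le S_{\mu_j}(\abs{I})\,\tr\pa{u_j^{1/2}g(H_{\bom,\La})u_j^{1/2}}$; as written the right-hand side still depends on $\omega_j$, and a local-trace bound of this kind is essentially the content of the lemma, not an available black box. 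To convert the form bound into a trace bound one must first pair the projection with an $\bom$-independent trace-class kernel. The paper does this by applying \eqref{UCPI} twice (hence $\kappa^{-2}$), then using $P\le (1+\alpha+E_0)(H_{\bom,\La}+1+\alpha)^{-1}$ together with $V_{\bom}\up{\La}\ge 0$ to replace the random resolvent by the free one, arriving at $\tr P \le \kappa^{-2}(1+\alpha+E_0)\sum_{i,j\in\widehat\La}\tr \sqrt{u_j}\,P\,\sqrt{u_i}\,T_{ij}$ with $T_{ij}=\sqrt{u_i}(H_{0,\La}+1+\alpha)^{-1}\sqrt{u_j}$; only then can spectral averaging be applied, against $\norm{T_{ij}}_1$.

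Your bootstrap for the exponent also does not reflect the actual mechanism, and its bookkeeping does not produce the stated constant: one factor of $\kappa^{-2}(1+E_0)$ per iteration over roughly $\log_2 d$ iterations would give exponent $\log_2 d$, not $2^{1+\frac{\log d}{\log 2}}$. In the paper, far pairs ($\abs{i-j}_\infty\ge 2+\delta_+$) are handled by a Combes--Thomas bound on $\norm{T_{ij}}_1$ plus spectral averaging, while the near-diagonal part $T_\La$ is treated with the iterated inequality of \cite[Eqs.~(A.4)--(A.5)]{CHK2} with the geometric choice $\sigma_j=\beta^{2^{j-1}}$, $\beta=\pa{\kappa^{-2}(1+E_0)}^{-1}$, after which the term $\pa{1-2^{-m}}\E\tr P$ is absorbed into the left-hand side. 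The number of iterations $m_d$ is forced by the Schatten-class requirement that $\pa{T_\La T_\La^*}^{2^{m_d-1}}$ be trace class, i.e.\ $2^{m_d-1}>d/4$ (since $u_i(H_{0,\La}+1+\alpha)^{-1}u_j\in\cT_q$ only for $q>\frac d2$), and this is the true source of the power $2^{m_d}\le 2^{1+\frac{\log d}{\log 2}}$ on $\kappa^{-2}(1+E_0)$. Without the resolvent insertion and these Schatten/Combes--Thomas inputs, the argument as you outline it does not close.
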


\begin{proof} We fix the box $\La$, let 
 $P= P_{\bom,\La }(I)$ for a closed interval $I \subset ]-\infty,E_{0}]$, and simply write $U$ for   $U\up{\La} $. Then it follows from \eq{UCPI}, using \eq{finsp}, that
\begin{align}\notag
\tr P &\le \kappa^{-1} \tr PUP = \kappa^{-1}\tr \sqrt{U}P  \sqrt{U} \le \kappa^{-2} \tr \sqrt{U}P UP \sqrt{U} 
 =  \kappa^{-2} \tr P UP U \\ \notag &=  \kappa^{-2} \tr P UP U P \le  \kappa^{-2} (1 +\alpha +E_0) \tr P U(H_{\bom,\La }+1+\alpha )^{-1} UP\\     \label{trp}
&\le  \kappa^{-2} (1+\alpha  +E_0) \tr P U(H_{0,\La }+1+\alpha )^{-1} UP\\ \notag
& =  \kappa^{-2} (1+\alpha  +E_0) \tr U P U(H_{0 ,\La}+1+\alpha )^{-1} \\ \notag
& =  \kappa^{-2} (1+\alpha  +E_0)  \sum_{i, j \in \widehat\La} \tr \sqrt{u_j} P \sqrt{u_i} T_{ij}, 
\end{align}
where
\beq
T_{ij}= \sqrt{u_i} (H_{0,\La }+1+\alpha)^{-1} \sqrt{u_j}  \qtx{for} i,j  \in \widehat\La.
\eeq

We now proceed as in \cite[Eqs.~(2.10)-(2.16)]{CHK2}, adapting \cite[Lemma A.1]{CHK2}.    Using  $\supp u_j \subset \La_{1 + \delta_+}(j)$, the resolvent identity (several times), trace estimates, and the Combes-Thomas estimate we obtain 
 \beq
 \norm{T_{ij}}_1 \le C_1 \e^{c_1 \abs{i-j}} \qtx{for all} i,j \in \widehat{\La} \qtx{with}  \abs{i-j}_\infty \ge  2 + \delta_+ , \label{Tdecay}
 \eeq
 where the constants $C_1$ and $c_1$ depend only on $d, \delta_+, V\up{0}_\infty$.
 Given $i \in \widehat{\La}$,  we set
  \beq
  \cJ _i =\set{j\in \widehat{\La};  \ ; \abs{i-j}_\infty <  2 + \delta_+}; \qtx{note that}  \# \cJ _i \le  \pa{2 + \delta_+}^d.
  \eeq
We have
\begin{align}\label{sumbreak}
& \sum_{i, j \in \widehat\La} \tr \sqrt{u_j} P \sqrt{u_i} T_{ij} = \sum_{i \in  \widehat\La} \set{\sum_{j\in \cJ_i^c} \tr \sqrt{u_j} P \sqrt{u_i} T_{ij} + \sum_{j\in \cJ_i} \tr \sqrt{u_j} P \sqrt{u_i} T_{ij}}.
 \end{align}
  Using spectral averaging \cite[Lemma~2.1]{CHK2} and  \eq{Tdecay}  we get
  \beq\label{spav}
 \E \abs{ \sum_{i \in \widehat{\La}  }\sum_{j\in \cJ_i^c} \tr \sqrt{u_j} P \sqrt{u_i} T_{ij}} \le C_2 S_\La(\abs{I})\abs{\La},
  \eeq
  where $C_2$  depends only on $d, \delta_+, V\up{0}_\infty$.
  
  Now let
  \beq
  T_\La= \sum_{i \in \widehat\La} \sum_{j\in \cJ_i}  \sqrt{u_i}T_{ij}\sqrt{u_j}= \sum_{i \in \La} \sum_{j\in \cJ_i}  {u_i}(H_{0,\La }+1+\alpha)^{-1}{u_j},
  \eeq
  so
  \beq\label{PTLa}
 \sum_{i \in \widehat\La} \sum_{j\in \cJ_i} \tr \sqrt{u_j} P \sqrt{u_i} T_{ij}=  \tr P  T_\La .
  \eeq
 Proceeding as in  in \cite[Eqs.~(A.4)-(A.5)]{CHK2}, we get
  \begin{align}
  \abs{\tr P  T_\La} \le 
  \pa{\sum_{j=1}^m    \tfrac {\sigma_j}{2^j \sigma_1\ldots\sigma_{j-1}}} \tr P + \tfrac {1}{2^m\sigma_1\ldots\sigma_{m}} \tr P \pa{T_\La T_\La^*}^{2^{m-1}},
 \end{align}
 for all $m\in \N$,  $\sigma_j >0$ for $j=1,2,\ldots,m$, and $\sigma_0=1$.
We take  $\beta=\pa{\kappa^{-2} (1 +E_0)}^{-1} $ and choose  $\sigma_j=  \beta^{2^{j-1}}$, so
 \begin{align}\label{PTLaalpha}
  \abs{\tr P  T_\La} \le 
 \beta\pa{1 - 2^{-m}} \tr P + 2^{-m} \beta^{1- 2^m} \tr P \pa{T_\La T_\La^*}^{2^{m-1}}.
 \end{align}
It follows from \eq{trp}, \eq{sumbreak}, \eq{spav}, \eq{PTLa}, \eq{PTLaalpha} that
\begin{align}
\E \tr P  &\le 
 C_2 \kappa^{-2} (1 +E_0+\alpha) S_\La(\abs{I})\abs{\La} +  \pa{1 - 2^{-m}}\E  \tr P \\
 & \hskip40pt  +2^{-m} \pa{\kappa^{-2} (1 +\alpha+E_0)}^{2^m}\E\set{ \tr P \pa{T_\La T_\La^*}^{2^{m-1}}},\notag
\end{align}
so
\begin{align}\label{soeq}
\E \tr P  & \le 
 C_22^m  \kappa^{-2} (1 +\alpha+E_0) S_\La(\abs{I})\abs{\La}\\
 \notag &  \hskip60pt +   \pa{\kappa^{-2} (1 +\alpha+E_0)}^{2^m}\E\set{ \tr P \pa{T_\La T_\La^*}^{2^{m-1}}}.\notag
\end{align}

We now estimate $\E\set{ \tr P \pa{T_\La T_\La^*}^{2^{m-1}}}$ as in \cite[Lemma~A.1]{CHK2}. Since we have 
$ {u_i}(H_{0,\La }+1+\alpha)^{-1}{u_j} \in \cT_{q}$ (i.e., $\tr \abs{ {u_i}(H_{0,\La }+1+\alpha)^{-1}{u_j}}^q<\infty$) for $q> \frac d 2$, letting 
\beq
m_d= \min \set{m \in \N; \quad  2^{m-1} >  \tfrac d 4  } = \min \set{m \in \N; \quad m >  \tfrac {\log d}{\log 2} -1 },
\eeq
we obtain, similarly to  \cite[Eq.~(A.8)]{CHK2}
\beq
\norm{ \pa{T_\La T_\La^*}^{2^{m_d-1}}}_1 \le  C_{d, \delta_+, V\up{0}_\infty} \abs{\La},
\eeq
and conclude, using spectral averaging as  in  \cite[Eqs.~(2.17)-(2.19)]{CHK2}, that 
\beq\label{soeq2}
\abs{\E\set{ \tr P \pa{T_\La T_\La^*}^{2^{m_d-1}}}} \le  C^\pr_{d, \delta_+, V\up{0}_\infty}S_\La(\abs{I}) \abs{\La}
\eeq

Putting together \eq{soeq} and \eq{soeq2} we get 
\begin{align}\label{soeq3}
\E \tr P  \le  C_{d, \delta_+, V\up{0}_\infty}
   \pa{\kappa^{-2} (1 +\alpha +E_0)}^{2^{m_d}} S_\La(\abs{I})\abs{\La},
    \end{align}
and \eq{preWegner99} follows, changing the constant to absorb $\alpha$ in case of periodic boundary condition.
\end{proof}

We are ready to prove Theorem~\ref{thmWegfull}.

\begin{proof}[Proof of Theorem~\ref{thmWegfull}]
Let $H_\bom$ be a  crooked Anderson Hamiltonian.
Given $E_{0}>0$, set $K=E_{0} + 2\pa{V\up{0}_\infty +MU_\infty} $, and define $\gamma$ by \eq{defgammaW}.   Given a box $\La$ as in  the theorem, we apply Theorem~\ref{thmUCPSP} with $H=H_0 +V\up{\La}_{\bom}$ and $W$ as in \eq{defWU}, concluding that for 
 any closed  interval $I \subset ]-\infty ,E_{0}]$ with $\abs{I} \le 2\gamma $  we have, using also \eq{Wprop22},
 \beq \label{chivchiWegfull}
 \Chi_{I}(H_{\bom,\La})\le  \gamma^{-2} \Chi_{I}(H_{\bom,\La}) W\up{\La} \Chi_{I}(H_{\bom,\La}) \le u_-^{-1}  \gamma^{-2}\Chi_{I}(H_{\bom,\La}) U\up{\La} \Chi_{I}(H_{\bom,\La}).
\eeq

We now apply  Lemma~\ref{lemWeg}, getting \eq{opWegner}.
\end{proof}

\section{At the bottom of the spectrum}\label{secbottom}

The following lemma is a slight extension of \cite[Theorem~1.1]{BLS}.

\begin{lemma}\label{lemBLS} Let  $H_{0}$ be a self-adjoint operator on a Hilbert space $\H$, bounded from below, and let $Y\ge 0$ be a bounded operator on $\H$.  Let $H(t)=H_{0}+ tY$ for $t\ge 0$, and set $E(t)= \inf \sigma (H(t))$, a non-decreasing function of $t$.  Let
$E(\infty)= \lim_{t\to \infty} E(t)=\sup_{t \ge 0} E(t)$.  Suppose $E(\infty) > E(0)$.  Given  $E_{1}\in ]E(0), E(\infty)[$,  let
\beq\label{kappa5}
\kappa = \kappa(H_0,Y,E_1)= \sup_{s >0; \ E(s) > E_{1}} \frac {E(s) -E_{1}}s   >0.
\eeq
Then  for all bounded operators $V \ge 0$ on $\H$ and Borel sets $B \subset ]-\infty, E_{1}]$ we have
\beq \label{PVP}
\Chi_{B}(H_{0}+V)Y\Chi_{B}(H_{0}+V) \ge \kappa \Chi_{B}(H_{0}+V).
\eeq
\end{lemma}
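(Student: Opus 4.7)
The plan is to reduce the operator inequality \eq{PVP} to a scalar bound on each vector in $\Ran\Chi_{B}(H_{0}+V)$. Writing $P=\Chi_{B}(H_{0}+V)$, it suffices to show that $\scal{\psi,Y\psi}\ge\kappa\norm{\psi}^{2}$ for every $\psi\in\Ran P$, since $P$ is an orthogonal projection and then, for any $\vphi\in\H$,
$$\scal{\vphi,PYP\vphi}=\scal{P\vphi,YP\vphi}\ge\kappa\norm{P\vphi}^{2}=\kappa\scal{\vphi,P\vphi}.$$

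To establish the scalar inequality, I would fix $\psi\in\Ran P$ and an auxiliary parameter $s>0$ with $E(s)>E_{1}$. Since $V\ge 0$ is bounded, $H_{0}+V$ is self-adjoint on $\cD(H_{0})$ and bounded from below, so $B\cap\sigma(H_{0}+V)$ is bounded and $\Ran P\subset\cD(H_{0}+V)=\cD(H_{0})$; in particular, all quadratic forms appearing below are well defined. Because $B\subset\,]-\infty,E_{1}]$, the spectral theorem gives $\scal{\psi,(H_{0}+V)\psi}\le E_{1}\norm{\psi}^{2}$, while the variational characterization of $E(s)=\inf\sigma(H(s))$ gives $\scal{\psi,(H_{0}+sY)\psi}\ge E(s)\norm{\psi}^{2}$. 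Subtracting these two and using $\scal{\psi,V\psi}\ge 0$, one gets
$$s\scal{\psi,Y\psi}\ge\pa{E(s)-E_{1}}\norm{\psi}^{2}+\scal{\psi,V\psi}\ge\pa{E(s)-E_{1}}\norm{\psi}^{2}.$$
Dividing by $s$ and taking the supremum over admissible $s$ as in the definition \eq{kappa5} of $\kappa$ yields $\scal{\psi,Y\psi}\ge\kappa\norm{\psi}^{2}$, completing the proof.

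The argument is essentially a one-line quadratic-form manipulation; there is no genuine obstacle, as the entire content of the lemma is packaged in the quantitative gap $E(s)-E_{1}$ encoded in the definition of $\kappa$. The only bookkeeping is confirming $\Ran P\subset\cD(H_{0})$ so that the expectation values are legitimate, which is automatic from the lower semiboundedness of $H_{0}+V$, and observing that because the bound holds for every admissible $s$, the supremum passes through without loss.
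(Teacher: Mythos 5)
Your proposal is correct and is essentially the paper's own argument: the paper compares $P(H_{0}+V)P\le E_{1}P$ with $H(s)\ge E(s)$ and drops $V\ge 0$, exactly the quadratic-form manipulation you perform vectorwise on $\Ran\Chi_{B}(H_{0}+V)$, followed by the supremum over admissible $s$. The only difference is presentational (operator inequalities sandwiched by the projection versus expectation values in $\psi$), so there is nothing to add.
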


\begin{proof}
Fix    $E_{1}\in ]E(0), E(\infty)[$.  For all  Borel sets $B \subset ]-\infty, E_{1}]$ we have, writing $P_{V}(B)=\Chi_{B}(H_{0}+V)$,
\beq
P_{V}(B) (H_{0}+V)P_{V}(B) \le E_{1} P_{V}(B).
\eeq
Since $E_{1}\in ]E(0), E(\infty)[$, there is $s>0$ such that
$E(s) > E_{1}$. Then,
\begin{align}
P_{V}(B) (H(s)+V -sY -E_{1})P_{V}(B) = P_{V}(B) (H_{0}+V - E_{1})P_{V}(B)\le 0,
\end{align}
and hence, using $V \ge 0$, 
\begin{align}
 s P_{V}(B) Y P_{V}(B)&\ge P_{V}(B) (H(s)+V -E_{1})P_{V}(B)\\
&   \ge  P_{V}(B) (H(s)  -E_{1})P_{V}(B)\ge (E(s) -E_{1})P_{V}(B).\notag
\end{align}
The estimate \eq{PVP} follows 
\end{proof}

To use Lemma~\ref{lemBLS} we must show that $E(\infty) > E(0)$.  This will follow from the following lemma.

\begin{lemma}\label{lemEinfty}  Let $H_0$,  $u_-$, $W$ be as in   Definition~\ref{defAndHam}   and \eq{defWU},  set $H(t)= H_0 + t u_- W$ for $t\ge 0$, and let  
 $E(t)= \inf \sigma(H(t))$, $E(\infty)= \lim_{t\to \infty} E(t)=\sup_{t \ge 0} E(t)$.
   Then
   \beq\label{Dirichcons}
E(t)\ge  t u_- \delta_-^{ M_d \pa{1 + \pa{V\up{0}_\infty +2tu_- }^{\frac 2 3}}} \qtx{for all} t\ge 0,
\eeq
so we conclude that 
\beq\label{Dirichcons1}
E(\infty) \ge \sup_{t \in [0,\infty[} t  \delta_-^{ M_d \pa{1 + \pa{V\up{0}_\infty +2t   }^{\frac 2 3}}}>0.
\eeq
\end{lemma}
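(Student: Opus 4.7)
The strategy is to apply the quantitative unique continuation principle to a Weyl sequence for $E(t)=\inf\sigma(H(t))$ and combine its lower bound with the trivial form inequality $\langle\psi, H(t)\psi\rangle \ge tu_-\|W\psi\|_2^2$ (which uses $H_0 \ge 0$ together with $W=W^2$). First I would dispose of the easy range $E(t) > tu_-$: since $\delta_- \le \tfrac12$ and the exponent is positive, $tu_-\delta_-^{M_d(1+(V\up{0}_\infty+2tu_-)^{2/3})} \le tu_- < E(t)$, making \eqref{Dirichcons} automatic there.

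In the remaining range $E(t) \le tu_-$ I would pick a Weyl sequence $\psi_n \in \cD(H(t)) = H^2(\R^d)$ with $\|\psi_n\|_2 = 1$ and $\epsilon_n := \|(H(t)-E(t))\psi_n\|_2 \to 0$. The form inequality together with Cauchy--Schwarz gives $tu_- \|W\psi_n\|_2^2 \le \langle\psi_n,H(t)\psi_n\rangle \le E(t) + \epsilon_n$. For the matching lower bound on $\|W\psi_n\|_2^2$, note that $\psi_n$ satisfies $(-\Delta+V')\psi_n = \tilde\zeta_n$ on $\R^d$ with $V' := V\up{0} + tu_-W - E(t)$ having $\|V'\|_\infty \le K := V\up{0}_\infty + 2tu_-$ (using $E(t)\le tu_-$), and $\|\tilde\zeta_n\|_2 = \epsilon_n$. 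Applying the $\R^d$ analog of Theorem~\ref{lemUCPeig} separately to $\Rea\psi_n$ and $\Ima\psi_n$ would yield constants $m'_d, c_d$ depending only on $d$ such that
\[ \delta_-^{m'_d(1+K^{2/3})}\|\psi_n\|_2^2 \le 2\|W\psi_n\|_2^2 + c_d\delta_-^2 \epsilon_n^2. \]

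I would justify this $\R^d$ analog by repeating the proof of Theorem~\ref{lemUCPeig} in Section~\ref{secUCPSP}, with $\Z^d$ replacing $\widehat\La$: no extension is needed since $\psi_n$ is already defined on $\R^d$, the identity $\sum_{k\in\Z^d}\|(\psi_n)_{\La_Y(k)}\|_2^2 = Y^d\|\psi_n\|_2^2$ replaces \eqref{sumY}, and the map $J(k)=k+2\e_1$ is automatically injective, so Theorem~\ref{thmucp} applies verbatim at every dominating unit cube with $Y$ chosen as in \eqref{Y36}. Combining the two bounds and letting $n\to\infty$ yields $E(t) \ge \tfrac12 tu_- \delta_-^{m'_d(1+K^{2/3})}$; the choice $M_d := m'_d + 1$ and the inequality $\delta_- \le \tfrac12$ then absorb the factor $\tfrac12$ into the exponent, establishing \eqref{Dirichcons}. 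The bound \eqref{Dirichcons1} follows by substituting $s=tu_-$ in \eqref{Dirichcons} and taking the supremum over $t\ge 0$, each summand being strictly positive for $t>0$.

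The main delicate point is the $\R^d$ adaptation of Theorem~\ref{lemUCPeig}. The dominating-cubes calculation itself is routine, but one has to verify that the steps which originally relied on the finite box $\La$ or the extension $\widetilde\psi$ (particularly the mass-covering identity and the choice of $Y$) still go through when the sums are taken over all of $\Z^d$, and that the resulting constants absorb cleanly into a single $M_d$ depending only on $d$.
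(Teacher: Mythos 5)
Your argument is correct, but it follows a genuinely different route from the paper. The paper works in finite volume: it restricts $H(t)$ to a box $\La_L(0)$ with Dirichlet boundary condition, uses compactness of the resolvent to get an exact ground state $\psi(t)$, applies Theorem~\ref{lemUCPeig} as stated (the error term vanishes for an exact eigenfunction), controls $\norm{V\up{0}+tu_-W-E_\La^{(D)}(t)}_\infty$ by the a priori bound $E_\La^{(D)}(t)\le d(\pi/L)^2+tu_-$, and then lets $L\to\infty$ using $E_\La^{(D)}(t)\to E(t)$, which also removes the $d(\pi/L)^2$ term. You instead stay on $\R^d$, take a Weyl sequence for $E(t)$, split into the trivial case $E(t)>tu_-$ and the case $E(t)\le tu_-$ (your substitute for the paper's a priori bound on the ground-state energy), pair the form inequality $tu_-\norm{W\psi_n}_2^2\le E(t)+\epsilon_n$ with a whole-space analogue of Theorem~\ref{lemUCPeig}, and pass to the limit in $n$. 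The one piece you must supply that the paper does not state is precisely that $\R^d$ analogue; your sketch of it is sound, and indeed the whole-space setting is simpler than the finite-box one (no reflection/periodic extension, the covering identity $\sum_{k\in\Z^d}\norm{\vphi_{\La_Y(k)}}_2^2=Y^d\norm{\vphi}_2^2$ is exact, and $J(k)=k+2\e_1$ is injective), while the logarithm in Theorem~\ref{thmucp} is controlled by the dominating-cube condition exactly as in the paper; handling complex $\psi_n$ via real and imaginary parts, and absorbing the factor $\tfrac12$ into the exponent via $\delta_-\le\tfrac12$, are both fine. What each approach buys: the paper's route reuses the already-proven finite-volume theorem verbatim at the cost of invoking exact finite-volume ground states and the standard convergence $E_\La^{(D)}(t)\to E(t)$; yours avoids any finite-volume limit and works with approximate eigenfunctions (which Theorem~\ref{thmucp} permits), at the cost of rerunning the dominating-cube argument on $\R^d$. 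The resulting dimension-dependent constant (your $m'_d+1$ versus the paper's $M_d$) may differ, but this is immaterial for the lemma and its applications, which only use a bound of this form with some constant depending on $d$.
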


\begin{proof}  By our normalization  $E(0)=0$, and it follows from the min-max principle that  $0\le E(t_2)- E(t_1)\le (t_2- t_1)u_-$ for $0\le t_1 \le t_2$.  Thus $E(\infty) \in [0,\infty]$ is well defined.

Given a box  $\La=\La_L(x_0)$, where $x_0 \in \Z^d$ and $L\in \N_{\mathrm{odd}}$, $L \ge 72\sqrt{d}$,  set $E_{\La}\up{D}(t)=\inf \sigma(H_{\La}\up{D}(t))$. 
 Note that  $E_{\La}\up{D}(t)\ge E(t)\ge 0 $ for all $t \ge   0$ since we have Dirichlet boundary condition,
and we also  have    
\beq\label{infdelta}
E_{\La}\up{D}(t) \le \inf \sigma(- \Delta_\La\up{D})+ t u_-= d \pa{\tfrac{\pi}L}^2 + t u_-.
\eeq

Since $H_{\La}\up{D}(t)$ has compact resolvent,  there exists $\psi(t) \in \cD(\Delta_\La\up{D})$, $\norm{\psi(t)}=1$, such that $H_{\La}\up{D}(t)\psi(t) = E_{\La}\up{D}(t)\psi(t) $.  Applying Theorem~\ref{lemUCPeig} with $H=H_{\La}\up{D}(t)- E_{\La}\up{D}(t)$ and $\psi=\psi(t)$, and using \eq{defWU} and \eq{Wprop}, we get (see \cite[Proof of Theorem~4.9]{RV} for a similar argument)
\beq
 \delta_-^{M_d \pa{1 + \norm{V\up{0} +tu_- W -E_{\La}\up{D}(t) }_\infty^{\frac 2 3}}} \le \scal{\psi(t),W\psi(t)}.
\eeq
Using \eq{infdelta}, we get
\beq
\scal{\psi(t),W\psi(t)} \ge  \delta_-^{M_d \pa{1 + \pa{V\up{0}_\infty +2tu_- + d \pa{\tfrac{\pi}L}^2}^{\frac 2 3}}}\qtx{for all} t\ge 0.
\eeq
It follows that
\begin{align}
E_{\La}\up{D}(t) &\ge E_{\La}\up{D}(0)+ t u_- \delta_-^{M_d \pa{1 + \pa{V\up{0}_\infty +2tu_- + d \pa{\tfrac{\pi}L}^2}^{\frac 2 3}}}\\   \notag
&\ge  t u_- \delta_-^{M_d \pa{1 + \pa{V\up{0}_\infty +2tu_- + d \pa{\tfrac{\pi}L}^2}^{\frac 2 3}}}.
\end{align}
Taking $\La=\La_L(0)$  and noting that $\lim_{L\to \infty }E_{\La}\up{D}(t) =E(t) $, we get
\beq
E(t) \ge  t u_- \delta_-^{ M_d \pa{1 + \pa{V\up{0}_\infty +2tu_-  }^{\frac 2 3}}} \qtx{for all} t\ge 0,
\eeq
so we have \eq{Dirichcons}, and hence  \eq{Dirichcons1}, since
\beq\label{Dirichcons12}
E(\infty) \ge \sup_{t \in [0,\infty[} t u_-  \delta_-^{ M_d \pa{1 + \pa{V\up{0}_\infty +2tu_-   }^{\frac 2 3}}}=  \sup_{t \in [0,\infty[} t \delta_-^{ M_d \pa{1 + \pa{V\up{0}_\infty +2t  }^{\frac 2 3}}}.
\eeq
\end{proof}

We can now prove Theorem~\ref{thmWegbottom}.

\begin{proof}[Proof of Theorem~\ref{thmWegbottom}]
Let $H_\bom$ be a be a crooked Anderson Hamiltonian.  By Lemma~\ref{lemEinfty} we have $E(\infty)>0$, so we can pick $E_{1}\in ]0, E(\infty)[$,
and we have  \eq{kappabottom}.

Consider a box $\La=\La_L(x_0)$, where $x_0 \in \Z^d$ and $L\in \N_{\mathrm{odd}}$, $L \ge2 + \delta_+$. 
Using \eq{Dirichletbound}, we get  
\beq
 \kappa(H_{0,\La}\up{D},u_-W\up{\La},E_1)\ge \kappa= \kappa(H_{0},u_-W,E_1)>0,
  \eeq
 and  Lemma~\ref{lemBLS} then gives \eq{PVPb}. 
Applying  Lemma~\ref{lemWeg} we get  \eq{preWegnerdis}.
\end{proof}

We now turn to Theorem~\ref{thmloc}.

\begin{proof}[Proof of Theorem~\ref{thmloc}]  

 Let $H_{\bom,\lambda}$ be a crooked Anderson Hamiltonian with disorder $\lambda>0$, and assume $S(t) \le C t^\theta$, $\theta \in ]0,1]$.  By Theorem~\ref{thmWegbottom}, $E(\infty)>0$, so we  fix
 $E_{1}\in ]0, E(\infty)[$.   Let us pick $E_2 \in ]E_1, E(\infty)[$   and $t^*>0$  such that $E(t^*)\ge  E_2$.

Now  let $\La$ be a box as in    Theorem~\ref{thmWegbottom}.  Then
 \beq\label{bottomest}
 \P\set{H_{\bom,\lambda,\La}\up{D} \ge E_2  } \ge 1 -  \abs{\La} S\pa{\lambda^{-1}[0,t^*]}\ge 1 - C\pa{ \lambda^{-1}t^*}^{-\theta}  \abs{\La}.
 \eeq
 Moreover, we have the Wegner estimate \eq{preWegnerdis} (we omit  the dependence on parameters):
  \begin{align}\label{preWegnerdisbottom1111}
\E\set{ \tr P\up{D}_{\bom,\lambda,\La }(I) }  \le   C_{E_1}  
  \pa{\lambda^{-1} \abs{I}}^{-\theta}\abs{\La}.
    \end{align}  
   for any closed interval $I \subset  ]-\infty,E_{1}]$ and boxes $\La$ as  in Theorem~\ref{thmWegbottom}.
   
 Using \eq{bottomest} and \eq{preWegnerdisbottom1111}, we can prove Theorem~\ref{thmloc} by following the proof of  \cite[Theorem~3.1]{GKfinvol}.
 \end{proof}


\end{document}